\newtheorem{theorem}{Theorem}[section]
\newtheorem{proposition}[theorem]{Proposition}
\numberwithin{equation}{section}  
\newtheorem{thm}{Theorem}[section]
\begin{document}

\vspace*{-0.9cm}

\title
[Stability analysis of a circuit with dual GTPase switches]
{Stability analysis of a signaling circuit with dual species of GTPase switches}

\author[L.M. Stolerman]{Lucas M. Stolerman$^{1}$}
\author[P. Ghosh]{Pradipta Ghosh$^{2,3,4*}$}
\author[P. Rangamani]{Padmini Rangamani$^{1*}$}

\thanks{$^1$ \scriptsize  Department of Mechanical and Aerospace Engineering, University of California, San Diego, La Jolla CA 92093}
\thanks{$^2$  Department of Medicine,University of California, San Diego, La Jolla, CA 92093}
\thanks{$^3$Department of Cellular and Molecular Medicine, University of California, San Diego, La Jolla, CA 92093}
\thanks{$^4$ Moores Comprehensive Cancer Center, University of California, San Diego, La Jolla, CA 92093}
\thanks{$^{*} $To whom correspondence should be addressed. e-mail: prangamani@ucsd.edu; prghosh@health.ucsd.edu}

\date{\today}

\keywords{ GTPases, Biochemical switches,
 Network motifs, Stability analysis}

\normalsize

\begin{abstract} 
GTPases are molecular switches that regulate a wide range of cellular processes, such as organelle biogenesis, position, shape, and function, vesicular transport between organelles, and signal transduction. 
These hydrolase enzymes operate by toggling between an active ("ON") guanosine triphosphate (GTP)-bound state and an inactive ("OFF") guanosine diphosphate (GDP)-bound state; such a toggle is regulated by GEFs (guanine nucleotide exchange factors) and GAPs (GTPase activating proteins). 
Here we dissect a network motif between monomeric (m) and trimeric (t) GTPases assembled exclusively in eukaryotic cells of multicellular organisms. 
To this end, we develop a system of  ordinary differential equations in which these two classes of GTPases are interlinked conditional to their ON/OFF states within a motif through feedforward and feedback loops. 
We provide explicit formulas for the steady states of the system and perform classical local stability analysis 
to systematically investigate the role of the different connections between the GTPase switches. 
Interestingly, a feedforward from the active mGTPase to the GEF of the tGTPase was sufficient to provide two locally stable states: one where both active/inactive forms of the mGTPase can be interpreted as having low concentrations and the other where both m- and tGTPase have high concentrations.
Moreover, when a feedback loop from the GEF of the tGTPase to the GAP of the mGTPase was added to the feedforward system, two other locally stable states emerged, both having the tGTPase inactivated and being interpreted as having low active tGTPase concentrations.
Finally, the addition of a second feedback loop, from the active tGTPase to the GAP of the mGTPase, gives rise to a family of steady states that can be parametrized by a range of inactive tGTPase concentrations.
Our findings reveal that the coupling of these two different GTPase motifs can dramatically change their steady state behaviors and shed light on how such coupling may impact information processing in eukaryotic cells.

\end{abstract}

\maketitle

\vspace*{-0.8cm}
{\scriptsize
\newpage
\tableofcontents
}
\vspace*{-1.2cm}


\newpage
\section{Introduction}
\noindent

 Each eukaryotic cell has a large number of GTP-binding proteins (also called  \emph{GTPases} or G-proteins); they are thought to be intermediates in an extended cellular signaling and transport network that touches on nearly every aspect of cell function  \cite{alberts2013molecular,hamm1998many,bourne1990gtpase}.  
 One unique feature of GTPases is that they serve as \emph{biochemical switches} that exist in an `OFF' state when bound to a guanosine diphosphate (GDP), and can be turned `ON' when that GDP is exchanged for a guanosine triphosphate (GTP) nucleotide.
\cite{alberts2013molecular,lipshtat2010design}. 
Turning the GTPase `ON' is the key rate limiting step in the activation-inactivation process, requires an external stimulus, and is catalyzed by a class of enzymes called guanine nucleotide exchange factors (GEFs) \cite{rossman2005gef}.
G proteins return to their `OFF' state when the bound GTP is hydrolyzed to guanosine diphosphate (GDP) via an intrinsic hydrolase activity of the GTPase; this step is catalyzed by GTPase-activating proteins (GAPs) \cite{wang1999gtpase}. 
Thus, GEFs and GAPs play a crucial role in controlling the dynamics  of the GTPase switch and the finiteness of signaling that it transduces \cite{bos2007gefs,cherfils2013regulation,siderovski2005gaps,ghosh2017gaps}. 
Dysregulation of GTPase switches has been implicated in cellular malfunctioning and is commonly encountered in diverse diseases \cite{lopez2014giv,ma2015therapeutic,wang2015giv,hartung2013akt}.  
For example, hyperactivation of GTPases \cite{digiacomo2020probing,hanahan2000hallmarks} is known to support a myriad of cellular phenotypes that contribute to aggressive tumor traits  \cite{cardama2017rho,liu2017thirty}. 
Such traits have also been associated with aberrant activity of GAPs \cite{digiacomo2020probing} or GEFs \cite{sriram2019gpcrs,wu2019illuminating,o2013emerging,garcia2009giv,ghosh2015heterotrimeric,papasergi2015g}. 
These works underscore the importance of GTPases as vital regulators of high fidelity cellular communication.
 
There are two distinct types of GTPases that gate signals: small or monomeric (m) and heterotrimeric (t) GTPases. mGTPases are mostly believed to function within the cell's interior and are primarily concerned with organelle function and cytoskeletal remodeling \cite{evers2000rho,etienne2002rho,takai2001small}.
tGTPases, on the other hand, were believed to primarily function at the cell's surface from where they gate the duration, type and extent of signals that are initiated by receptors on the cell's surface \cite{gilman1987g,barr1992trimeric}. 
These two classes of switches were believed to function largely independently, until early 1990's when tGTPases were detected on intracellular membranes, e.g., the Golgi \cite{barr1992trimeric,stow1991heterotrimeric}, and studies alluded to the possibility that they, alongside mGTPases, may co-regulate organelle function and structure \cite{cancino2013signaling}. 
But it was not until 2016 that the first evidence of functional coupling between the two switches -- m- and tGTPases-- emerged \cite{lo2015activation}. 
Using a combination of biochemical, biophysical, structural modeling, live cell imaging, and numerous readouts of Golgi functions, it was shown that m- and tGTPase co-regulate each other on the Golgi \cite{lo2015activation}. 
mGTPase (Arf family) must be turned `ON' for it to engage with a GEF (GIV/Girdin) for tGTPase, and the latter subsequently triggers the activation of a tGTPase (of the Gi sub-family). 
Upon activation, the tGTPase enhances a key GAP for mGTPases (i.e., ArfGAP2/3) that turns `OFF' the mGTPase, thereby terminating mGTPase signaling.  
This phenomenon of co-regulation between the two classes of GTPases was shown to be critical for maintaining the finiteness of signaling via both species of GTPase on the membrane, which was critical for maintaining Golgi shape and function (trafficking within the secretory pathway). In doing so, this dual GTPase circuit was converting simple chemical signals into complex mechanical outputs (membrane trafficking). 
Because emerging evidence from protein-protein interaction networks and decades of work on both species of GTPases suggest that co-regulation through coupling between the GTPases is possible and likely on multiple organellar membranes, it begs the question -- what are the advantages of two species of GTPase switches coupled through a closed feedback loop within a network motif as opposed to single switches? 
The answer to this question has not yet been experimentally dissected or intuitively theorized.   

Mathematical models of signaling networks have contributed significantly to our understanding of how network motifs might function \cite{alon2019introduction,bower2001computational,eungdamrong2004modeling,cowan2012spatial,morris2010logic,getz2019predictive}. 
 Continuous-time dynamical systems, commonly represented by systems of ordinary differential equations (ODEs), are powerful tools for building rich and insightful mathematical models  \cite{milo2002network}. 
For example, a comprehensive steady state analysis of ODE system helped frame the concept of ``zero-order ultrasensitivity'' where large responses in the active fraction of a protein of interest are driven by small changes in the saturated ratio of the enzymes \cite{goldbeter1981amplified}. 
Similarly, modeling biochemical networks with dynamical systems also has revealed the existence of bistable switches and biological oscillators within a feedback network architecture \cite{ferrell2002self,ingolia2007positive,ferrell2001bistability}.  
The simple system proposed by Ferrell and Xiong \cite{ferrell2001bistability} served as a basis for modeling cellular all-or-none responses, and hence, crucial for decision-making within several signaling processes.
Dynamical systems have also be used for mapping chemical reactions into differential equations; when numerically integrated (clustered) these systems can be used to predict the evolution of large reaction networks \cite{alon2007network,neves2002,shen2002network}. 
For example, clustering methods have revealed the existence of recurrent structures, the so-called ``network motifs', the dynamics of which have been inferred with a Boolean kinetic system of differential equations \cite{shen2002network}. 
These models produced various dynamic features corresponding to the motif structure, which allowed the understanding of underlying biology (i.e., gene expression patters).
Furthermore, studies using network ODE models have revealed that information is processed in cells through intricate connections between signaling pathways rather than individual motifs \cite{bhalla1999a,bhalla1999emergent,bhalla2001robustness}. 
These findings led to the so-called ``learned behavior'' hypothesis, on comprised information within intracellular biochemical reactions, was verified by analyzing extended signal duration, feedback loops, and multiple thresholds of stimulation and signal outputs.
Finally, from a systems biology modeling perspective, large systems of differential equations are usually hard to analyze, but when combined with experiments, they can give rise to quantitative predictions \cite{logsdon2014systems,finley2009computational,finley2013effect}, as exemplified by the work of Yen et al. \cite{yen2011two}. Finally, dynamical systems modeling of biochemical networks can help understand how stochastic noise is attenuated or amplified within cells \cite{hornung2008noise,hooshangi2005ultrasensitivity,shibata2005noisy,pedraza2005noise, qiao2019network}, e.g., the trade-off between sensitivity and noise reduction in any given network topology, and how some network topologies could perform a \emph{dual function} of noise attenuation and adaptation \cite{qiao2019network}.

Here we build mathematical models to investigate the dynamic properties of a coupled biochemical GTPase switches. Beginning with the uncoupled GTPase switches (Fig.\ref{Fig1}A) as our starting point, we specifically sought to understand the stability features of a new network motif (Fig.\ref{Fig1}B) to gain insights into any relative advantages of coupling two distinct classes of GTPase switches. 
To this end, we obtained the steady states of this new network motif, to understand the input-output relationships.
Then we studied the dynamic behavior under small perturbation around these steady states using local stability analysis \cite{strogatz1994,perko2013differential}. We investigated the different feedforward and feedback loops between these two motifs, all representing the observed biochemical and biophysical events during signal transduction (Fig.\ref{Fig1}C). 
To gain biological intuition, we investigated the appropriate space of initial conditions that guarantee the existence of the different steady states, thus obtaining insight to the different regimes of the GTPase concentration levels.
In what follows, we present the model assumptions and derivation in \S \Cref{model_development}, the local stability analysis and numerical simulations in \S \Cref{mathematical_analysis},  and discuss our findings in the context of GTPase signalling networks in \S \Cref{discussion}. 

\begin{figure}
	\centering
	\includegraphics[scale=1.6]{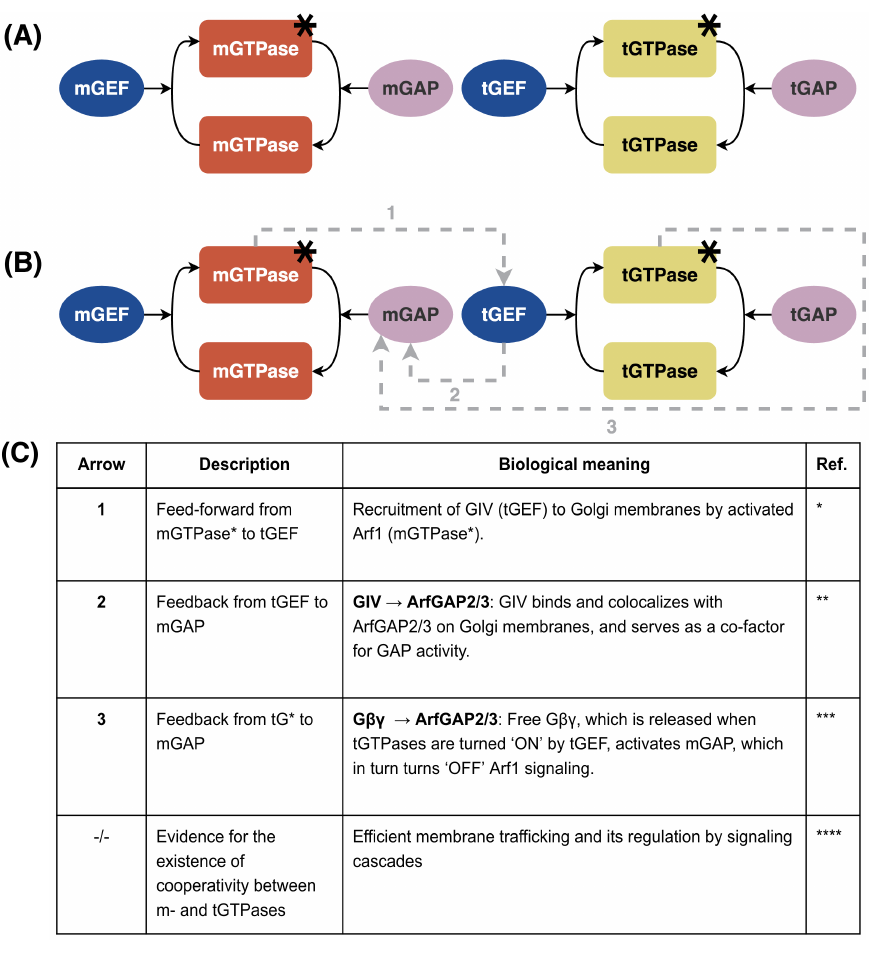}
	\caption{\textbf{A network motif in which two species of GTPases are interlinked.} 
	(A) Uncoupled monomeric and trimeric GTPase switches are represented by mGTPase and tGTPase, respectively. The black star denotes the active forms. Activation and inactivation are regulated by GEFs and GAPs, where the first letter (m or t) indicates the associated GTPase. (B) Our proposed mathematical model describes the interaction between the two GTPase switches.  Arrows 1, 2, and 3 show the feedforward and feedback loops that were found experimentally. (C) Description and biological meaning of each arrow connecting the GTPase switches. References:  \cite{lo2015activation} for arrow 1 (*), \cite{lo2015activation} for arrow 2 (**),  \cite{lo2015activation,jamora1997regulation} for arrow 3 (***), and \cite{jamora1997regulation,stow1991heterotrimeric,stow1998vesicle,stow1995regulation,cancino2013signaling} for evidence of cooperativity between m and tGTPases (****). }
 \label{Fig1}
\end{figure}

\section{Model Development}
\label{model_development}

In this section, we introduce our mathematical model for the GTPase coupled circuit  (Fig.\ref{Fig1}B). We begin with outlining the model assumptions in \S \Cref{Assumptions} and describe the reactions and governing equations in detail in \S \Cref{Governing_Equations}.

\subsection{Assumptions.}
\label{Assumptions}
Our model describes the time evolution of the concentrations for the different system components. 
\Cref{T:reactions}  contains the set of reactions in our system.  
In this work, we only consider the toggling of GTPases that are mediated by GEFs and GAPs that activate and inactivate them, respectively. 
Moreover, and guided directly by experimental findings \cite{lo2015activation}, we explicitly include the recruitment/engagement of tGEF by $\text{mG}^{*}$ (feedforward represented by arrow 1 in Fig.\ref{Fig1}B) and two feedback loops (arrows 2 and 3 in Fig.\ref{Fig1}B) representing the activation of mGAP by $\text{tGEF}^{*}$ and $\text{tG}^{*}$, respectively.  To develop the model equations, we consider a well-mixed regime to investigate how the kinetic reactions affects the dynamics of this GTPase circuit.  
We also assume that the concentrations of the species are in large enough amounts that deterministic kinetics hold \cite{gillespie2009deterministic,hahl2016comparison}. 
Finally, for mathematical tractability, all reactions in the system are governed by mass-action and, therefore, nonlinear kinetics such as Hill functions or Michaelis-Menten are not considered \cite{Changeux1967}.

\begin{table}[hbt!]
\caption{GTPase circuit reactions and rates used in the model}
\label{T:reactions}
\begin{center}
\begin{adjustbox}{width=\textwidth}
\begin{tabular}{clllll}
\hline
  & \textbf{List of Reactions} &  \textbf{Reaction Rate}\\
\hline
$\text{mG}^*$  activation &  \ce{ mG  + $\text{mGEF}^*$ ->C[$k^{mG}_{on}$][]   $\text{mG}^*$} &  $k^{mG}_{on}[mGEF^*][mG]$   \\
$\text{mG}^*$ inactivation & \ce{ $\text{mG}^*$  + $\text{mGAP}^*$ ->C[$k^{mG}_{off}$][] mG} &  $k^{mG}_{off}[mGAP^*][mG^*]$    \\
&&\\
Feedforward from mG* to tGEF (arrow 1) & \ce{$\text{mG}^*$  + $\text{tGEF}$ ->C[$k^{I}_{on}$][]   $\text{tGEF}^*$} &  $k^{I}_{on}[tGEF][mG^*]$   \\
&&\\
$\text{tG}^*$ activation & \ce{tG  + $\text{tGEF}^*$ ->C[$k^{tG}_{on}$][]   $\text{tG}^*$} &  $k^{tG}_{on}[tGEF^*][tG]$  \\

$\text{tG}^*$ inactivation & \ce{$\text{tG}^*$  + $\text{tGAP}^*$ ->C[$k^{tG}_{off}$][]  tG} &  $k^{tG}_{off}[tGAP^*][tG^*]$  \\
&&\\
Feedback loop from tGEF to mGAP (arrow 2) & \ce{$\text{tGEF}^*$  + $\text{mGAP}$ ->C[$k^{II}_{on}$][]   $\text{mGAP}^*$} & $k^{II}_{on} [tGEF^*][mGAP]$   \\
&&\\
Feedback loop tG* to mGAP (arrow 3) & \ce{$\text{tG}^*$  + $\text{mGAP}$ ->C[$k^{III}_{on}$][]   $\text{mGAP}^*$} &  $k^{III}_{on}[mGAP][tG^*]$  \\
\\ \hline
\end{tabular}
\end{adjustbox}
\end{center}
\end{table}

\subsection{Governing Equations.}
\label{Governing_Equations}

We developed a system of ODEs that describe coupled toggling of two switches, i.e., cyclical activation and inactivation of monomeric and trimeric GTPases within the network motif shown in Fig.\ref{Fig1}B and described in Fig.\ref{Fig1}C. In what follows, the brackets represent concentrations, which are nonnegative real numbers. The system equations are given by 

\small
\begin{eqnarray}
	\label{model_mod}
	\frac{d[mG]}{dt}&=& -  k^{mG}_{on}[mGEF^*][mG] + k^{mG}_{off}[mGAP^*][mG^*] \label{model_mod:1}\\
	\frac{d[mG^*]}{dt}&=&  k^{mG}_{on}[mGEF^*][mG] - k^{mG}_{off}[mGAP^*][mG^*] - k^{I}_{on}[tGEF][mG^*] \label{model_mod:2}\\
	\frac{d[tG]}{dt}&=& -  k^{tG}_{on}[tGEF^*][tG] + k^{tG}_{off}[tGAP^*][tG^*]  \label{model_mod:3}\\
	\frac{d[tG^*]}{dt} &=&  k^{tG}_{on}[tGEF^*][tG]  -  k^{tG}_{off}[tGAP^*][tG^*] - k^{III}_{on}[mGAP][tG^*] \label{model_mod:4}\\
	\frac{d[tGEF]}{dt}&=& - k^{I}_{on}[tGEF][mG^*] \label{model_mod:5}\\
	 \frac{d[tGEF^*]}{dt}&=&  k^{I}_{on}[tGEF][mG^*] - k^{II}_{on} [tGEF^*][mGAP] \label{model_mod:6}\\
	 \frac{d[mGAP]}{dt}&=&  - k^{III}_{on}[mGAP][tG^*] - k^{II}_{on} [tGEF^*][mGAP]  \label{model_mod:7}\\
	  \frac{d[mGAP^*]}{dt}&=&  k^{III}_{on}[mGAP][tG^*] + k^{II}_{on} [mGAP] [tGEF^*] \label{model_mod:8}
\end{eqnarray}
\normalsize

where $k^{mG}_{on}$ and $k^{tG}_{on}$  are the activation rates of mG and tG, respectively. The inactivation rates for mG and tG are given by $k^{mG}_{off}$ and $k^{tG}_{off}$. 
We denote $k^{I}_{on}$ as the rate of tGEF activation by $mG^*$ through the feedforward connection. 
Finally, the  mGAP activation rates through the $\text{tGEF}^*$ and $\text{tG}^*$ feedback loops are given by $k^{II}_{on}$ and $k^{III}_{on}$, respectively. 
To complete the system definition, all model components must have nonnegative initial conditions. 
In particular, if $k^{I}_{on} = k^{II}_{on} = k^{III}_{on} =0$, then our system describes two uncoupled GTPase switches (Fig.\ref{Fig1}A) that each have the same dynamics of the single GTPase model proposed in \cite{lipshtat2010design}.

\section{Mathematical analysis and results}
\label{mathematical_analysis}

In this section, we explore the role of the feedforward and feedback loops on the system dynamics.
To simplify our mathematical analysis and investigate the isolated  contributions of the different network architectures, we assume that all activation rates, when non-zero, are equal and given by $k_{on}$, as well the inactivation rates for both m- and t-GTPase switches  ($k^{mG}_{off} = k^{tG}_{off}=k_{off}$).
We also assume that the concentrations of $[mGEF^*]$ and $[tGAP^*]$ are constant in our model.

It is convenient to rewrite our ODE system in the form $$\frac{d\textbf{x}}{dt} = S. \textbf{v(x)},$$ where $\textbf{x}$ represents the vector of concentrations for the different components, $S$ is the stoichiometric matrix and $\textbf{v(x)}$ is a vector with the different reaction rates \cite{famili2003convex,rangamani2007survival}. 
Thus we define the components $ \textbf{x}(1) = [mG],  \textbf{x}(2) = [mG^*], \textbf{x}(3)= [tG], \textbf{x}(4) = [tG^*], \textbf{x}(5)= [tGEF], \textbf{x}(6)=  [tGEF^*], \textbf{x}(7) = [mGAP]$, and  $\textbf{x}(8) = [mGAP^*]$.
We also write the reaction velocities as  
   \begin{eqnarray*}
   	&& v_1 = k_{on}[mGEF^*]  \textbf{x}(1) ,v_2 = k_{off} \textbf{x}(2) \textbf{x}(8) , v_3 = k_{on} \textbf{x}(2)\textbf{x}(5), v_4 = k_{on} \textbf{x}(3)\textbf{x}(6)\\
   	&&
   	v_5 =  k_{off}[tGAP^*]\textbf{x}(4), v_6 = k_{on} \textbf{x}(6)\textbf{x}(7),v_7 = k_{on}  \textbf{x}(4)\textbf{x}(7).
   \end{eqnarray*}

The $8\times 7$ stoichiometric matrix for the system given by Eqs. \ref{model_mod:1} -- \ref{model_mod:8} is then  given by

\begin{equation}
S = 
\begin{bmatrix} 
&&&\text{\footnotesize{arrow 1}}&&&\text{\footnotesize{arrow 2}} & \text{\footnotesize{arrow 3}}\\
\text{mG}&-  1  & 1 & 0 & 0 &0 &0 &0   \\
\text{mG*}&1   &  - 1 & -1 & 0 &0 &0 &0   \\
\text{tG}&0  & 0 & 0 & -1 &1 &0 &0  \\
\text{tG*}&0  & 0 & 0 & 1 &-1 &0 &-1 \\
\text{tGEF}&0 & 0 & -1 & 0 &0 &0 &0   \\
\text{tGEF*}&0  & 0 & 1 & 0 &0 &-1 &0   \\
\text{mGAP}&0  & 0& 0 & 0 &0 &-1 &-1  \\
\text{mGAP*}&0  & 0 & 0 & 0 &0 &1 &1 
\end{bmatrix}
\label{stoich_matrix}
\end{equation}  

where the rows and columns of $S$ (Eq. \ref{stoich_matrix}) represent the $8$ components and $7$ reactions, respectively. The right null space of $ S $ comprises the steady state flux solutions, and the left null space contains the conservation laws of the system \cite{famili2003convex}. 
On the other hand, the column space contains the dynamics of the time-derivatives, and the rank of S is the actual dimension of the system in which the dynamics take place. 
In the following subsections, we analyze Eqs. \ref{model_mod:1} -- \ref{model_mod:8} when the mGTPase and tGTPase switches are coupled through: (i) A  feedforward connection $mG^* \to tGEF$ only (arrow 1), (ii) feedforward connection $mG^* \to tGEF$ and  feedback loop $tGEF \to mGAP$ (arrows 1 and 2) and (iii) feedforward connection $mG^* \to tGEF$ and feedback loops $tGEF \to mGAP$ and $tG^* \to mGAP$ (arrows 1, 2, and 3). 

\subsection{Feedforward connection: Recruitment of tGEF by active mGTPases $(mG^* \to tGEF)$.}

\label{subsec_Arrow1}

To analyze Eqs. \ref{model_mod:1} -- \ref{model_mod:8} with the feedforward connection only (arrow 1 in Fig.\ref{Fig1}B), we assume $k^{II}_{on} = k^{III}_{on} = 0$, which means that the feedback loop arrows 2 and 3 are not considered this first analysis. This represents the simple feedforward connection of the two GTPase switches, which, in cells appears to be mediated via activation--dependent coupling of mG* to tGEF \cite{lo2015activation}. In this case, the stoichiometric matrix (Eq. \ref{stoich_matrix}) is $8 \times 5$.

\subsubsection*{Conservation laws.}

For this particular system, the total concentrations $[tG_{tot}] := [tG] + [tG^*]$ and $[tGEF_{tot}] := [tGEF] + [tGEF^*]$  are constant over time and are strictly positive. 
For this reason,  it is convenient to introduce the fractions  $\mathcal{T} := \frac{[tG]}{[tG_{tot}]}$ and $ \mathcal{G} := \frac{[tGEF]}{[tGEF_{tot}]} $ of inactive tGTPase and tGEF in the system, respectively, and let $\mathcal{T}^*$ and $\mathcal{G}^*$ denote the fraction of their active forms. 
We then use $\mathcal{T}$ + $\mathcal{T}^* =1$ and $\mathcal{G}$ + $\mathcal{G}^* =1$   to rewrite the system in the form

\footnotesize
\begin{eqnarray}
	\frac{d[mG]}{dt}&=& -  k_{on} [mGEF^*][mG] + k_{off}[mGAP^*][mG^*]  \label{model_red_3:1}  \\
	\frac{d[mG^*]}{dt}&=&   k_{on} [mGEF^*][mG]  - k_{off}[mGAP^*][mG^*] -k_{on}[tGEF_{tot}](1 - \mathcal{G}^*)[mG^*] \label{model_red_3:2}  \\
	\frac{d\mathcal{T}^*}{dt}&=& k_{on} [tGEF_{tot}]  \mathcal{G}^* (1 - \mathcal{T}^*) - k_{off}[tGAP^*]\mathcal{T}^*  \label{model_red_3:3}   \\
	\frac{d\mathcal{G}^*}{dt}&=& k_{on}(1 - \mathcal{G}^*)[mG^*] \label{model_red_3:4} 
\end{eqnarray}
\normalsize

From the stoichiometric matrix (Eq. \ref{stoich_matrix}), we observe that
\begin{equation}
\label{constrain}
   [mG] +  [mG^*] + [tGEF_{tot}] \mathcal{G^*} = C
 \end{equation}
 where $C>0$ is constant over time. We compute the left null space of the stoichiometric matrix and confirm the total of three conservation laws in this case.
 The conservation law given by Eq. \ref{constrain} reduces the system to three unknowns, which eases the steady state and stability analysis.
 
 \subsubsection*{Steady states.} To find biologically plausible (nonnegative) steady states of the system given by Eqs. \ref{model_red_3:1} -- \ref{constrain}, we must find $\widehat{[mG]}$, $\widehat{[mG^*]}$, $\widehat{\mathcal{G}^*}$, and $\widehat{\mathcal{T}^*}$ such that the time-derivatives in Eqs. \ref{model_red_3:1}--\ref{model_red_3:4} are zero and the conservation law given by Eq. \ref{constrain} is satisfied. Defining $k = \frac{k_{off}}{k_{on}}$, we must solve the following system:

\small
\begin{eqnarray*}
 &&   [mGEF^*]\widehat{[mG]}  - k[mGAP^*]\widehat{[mG^*]} - [tGEF_{tot}](1 - \widehat{\mathcal{G}^*})\widehat{[mG^*]} = 0  \\
&&  [tGEF_{tot}]  \widehat{\mathcal{G}^*} (1 - \widehat{\mathcal{T}^*}) - k[tGAP^*]\widehat{\mathcal{T}^*}  = 0   \\
 && (1 -\widehat{\mathcal{G}^*})\widehat{[mG^*]} = 0  \\
 && 
 \widehat{[mG]} +  \widehat{[mG^*]} + [tGEF_{tot}] \widehat{\mathcal{G^*}} = C 
\end{eqnarray*}
\normalsize

 From the third equation above, we must have $\widehat{[mG^*]}=0$ or $\widehat{\mathcal{G}^*} = 1$. Thus we divide the steady state analysis in these two cases and summarize our results in the following proposition, whose proof can be found in the appendix \ref{appendix_1}.

\begin{proposition}
\label{prop_arrow1}
Let $k = \frac{k_{off}}{k_{on}}$. The steady states 
$ \widehat{\bf{x}}= \left(\widehat{[mG]},\widehat{[mG^*]},\widehat{\mathcal{T}^*},\widehat{\mathcal{G}^*}  \right) $
of the system given by Eqs. \ref{model_red_3:1} -- \ref{constrain} are given by 

\begin{itemize}
\item \underline{Steady state 1:}
\begin{equation}
\widehat{\bf{x}} = \left(0, 0, \frac{1}{1+\frac{k [tGAP^*]}{C}}, \frac{C}{[tGEF_{tot}]}  \right)
\label{ss1}
\end{equation}

if and only if $C \leq [tGEF_{tot}]$ and
\item  \underline{Steady state 2:}
\begin{eqnarray}
\label{ss2}
\widehat{\bf{x}} &=& 
\left(\frac{k [mGAP^*]}{[mGEF^*]+k[mGAP^*]} \left( C- [tGEF_{tot}]\right), \right. \nonumber\\
&& \left. \frac{[mGEF^*]}{[mGEF^*]+k[mGAP^*]} \left( C- [tGEF_{tot}]\right), \right. \nonumber \\
&& \left. \frac{[tGEF_{tot}]}{[tGEF_{tot}] + k[tGAP^*]},1  \right).
\end{eqnarray}
if and only if $C \geq [tGEF_{tot}]$.
\end{itemize}
\end{proposition}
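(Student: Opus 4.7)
The plan is to set up the four algebraic equations obtained by equating the right-hand sides of Eqs.~\ref{model_red_3:1}--\ref{model_red_3:4} to zero together with the conservation law Eq.~\ref{constrain}, and then exploit the factored form of the third of these equations to branch the analysis into two disjoint cases. First I would verify the branching: the equation $(1-\widehat{\mathcal{G}^*})\widehat{[mG^*]}=0$ forces either $\widehat{\mathcal{G}^*}=1$ or $\widehat{[mG^*]}=0$ (these cases overlap only in the degenerate situation $C=[tGEF_{tot}]$). I would then handle each branch in turn, showing that the remaining three equations can be solved uniquely for the other unknowns, and translate the nonnegativity constraints on concentrations and fractions into the advertised inequalities on the conserved quantity $C$.

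For the branch $\widehat{[mG^*]}=0$, I would substitute into Eq.~\ref{model_red_3:1} to conclude $\widehat{[mG]}=0$ (since $[mGEF^*]>0$), then read off $\widehat{\mathcal{G}^*}=C/[tGEF_{tot}]$ from the conservation law. The requirement $\widehat{\mathcal{G}^*}\in[0,1]$ is precisely $C\le[tGEF_{tot}]$. Finally, Eq.~\ref{model_red_3:3} becomes a linear equation in $\widehat{\mathcal{T}^*}$ whose solution can be rearranged into the form $1/\bigl(1+k[tGAP^*]/C\bigr)$ displayed in Eq.~\ref{ss1}. One has to check that this expression is well defined when $C>0$; the case $C=0$ is trivial (everything vanishes).

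For the branch $\widehat{\mathcal{G}^*}=1$, the conservation law yields $\widehat{[mG]}+\widehat{[mG^*]}=C-[tGEF_{tot}]$, so nonnegativity of both concentrations forces $C\ge[tGEF_{tot}]$. Equation~\ref{model_red_3:1} reduces (since $1-\widehat{\mathcal{G}^*}=0$) to the proportionality $[mGEF^*]\widehat{[mG]}=k[mGAP^*]\widehat{[mG^*]}$, and combining this with the previous linear constraint gives the explicit expressions in Eq.~\ref{ss2}. Plugging $\widehat{\mathcal{G}^*}=1$ into Eq.~\ref{model_red_3:3} and solving the resulting linear equation for $\widehat{\mathcal{T}^*}$ yields the remaining coordinate.

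The computations here are all linear and routine; there is no real technical obstacle. The main point that requires care is the ``if and only if'' direction of the nonnegativity constraints: I would emphasize that one must verify not only that the proposed tuples solve the system, but also that \emph{no other} nonnegative solutions exist outside the stated ranges of $C$. This amounts to observing that in Case 1, $\widehat{\mathcal{G}^*}=C/[tGEF_{tot}]>1$ when $C>[tGEF_{tot}]$ violates the definition of a fraction, and in Case 2, $C<[tGEF_{tot}]$ forces $\widehat{[mG]}+\widehat{[mG^*]}<0$, which is incompatible with nonnegative concentrations. These two checks close the argument.
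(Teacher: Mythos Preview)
Your proposal is correct and follows essentially the same approach as the paper's proof: both branch on the factored equation $(1-\widehat{\mathcal{G}^*})\widehat{[mG^*]}=0$, solve the remaining linear equations in each case, and recover the inequalities on $C$ from the nonnegativity/fraction constraints. The only cosmetic difference is in how the converse (``only if'') direction is argued---the paper proceeds by contradiction (assuming the inequality holds but the case hypothesis fails), while you directly observe that the displayed formula leaves the admissible range when the inequality is violated---but these are logically equivalent and equally routine.
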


Given the explicit expressions for the steady states and the parameter range in which they exist, we perform a local stability analysis to determine if these states are stable or unstable under small perturbations.
 We adopt the classical linearization procedure based on the powerful Hartman-Grobman theorem \cite{strogatz1994,perko2013differential}. We show that the steady states are \emph{locally asymptotically stable}, which means that any trajectory will be attracted to the steady state provided the initial condition is sufficiently close. 
\color{black}

\subsubsection*{Local Stability Analysis.}

Using that $[mG] = C - [mG^*] + [tGEF_{tot}] \mathcal{G^*}$ (from Eq. \ref{constrain}), we obtain the following three-dimensional system:

\small
\begin{eqnarray*}
	\frac{d[mG^*]}{dt}&=&  f_1([mG^*],\mathcal{G}^*,\mathcal{T}^*)    \\
	\frac{d\mathcal{T}^*}{dt}&=& f_2([mG^*],\mathcal{G}^*,\mathcal{T}^*)   \\
	\frac{d\mathcal{G}^*}{dt}&=& f_3([mG^*],\mathcal{G}^*,\mathcal{T}^*) 
\end{eqnarray*}
\normalsize

where 
\begin{equation*}
\begin{aligned}
     f_1([mG^*],\mathcal{G}^*,\mathcal{T}^*) & = k_{on} [mGEF^*]\left( C - [mG^*] - [tGEF_{tot}] \right) \\ 
     &- k_{off}[mGAP^*][mG^*] -k_{on}[tGEF_{tot}](1 - \mathcal{G}^*)[mG^*],
\end{aligned}
\end{equation*}

$$ f_2([mG^*],\mathcal{G}^*,\mathcal{T}^*)  =  k_{on} [tGEF_{tot}]  \mathcal{G}^* (1 - \mathcal{T}^*) - k_{off}[tGAP^*]\mathcal{T}^*,$$
and 
$$ f_3([mG^*],\mathcal{G}^*,\mathcal{T}^*)  = k_{on}(1 - \mathcal{G}^*)[mG^*].$$

To perform the local stability analysis, we calculate the Jacobian matrix  evaluated at the steady state \begin{equation}
\mathcal{J} \left[\widehat{[mG^*]},\widehat{\mathcal{T}^*},\widehat{\mathcal{G}^*}\right] =  \left. \begin{bmatrix} 
 \frac{\partial f_1}{\partial [mG^*]}   & \frac{\partial f_1}{\partial \mathcal{T}^*}  & \frac{\partial f_1}{\partial \mathcal{G}^*}   \\
 \frac{\partial f_2}{\partial [mG^*]}   & \frac{\partial f_2}{\partial \mathcal{T}^*}  & \frac{\partial f_2}{\partial \mathcal{G}^*} \\
\frac{\partial f_3}{\partial [mG^*]}   & \frac{\partial f_3}{\partial \mathcal{T}^*}  & \frac{\partial f_3}{\partial \mathcal{G}^*}
 \end{bmatrix} \right\rvert_{ \left(\widehat{[mG^*]},\widehat{\mathcal{T}^*},\widehat{\mathcal{G}^*}\right)}
 \label{jac_matrix}
\end{equation}
and by showing that all its eigenvalues have a negative real part, we can prove that the steady state is LAS \cite{strogatz1994}, provided we further assume that the strict inequalities from \Cref{prop_arrow1} hold. This is the content of the following theorem.

\begin{thm}
Let $C$ be the conservation quantity from Eq. \ref{constrain}. Then, 
\begin{enumerate} 
	\item  If $C < [tGEF_{tot}]$, the steady state 1 (Eq. \ref{ss1}) is LAS. 
	\item If $C > [tGEF_{tot}]$, the steady state 2  (Eq. \ref{ss2}) is LAS. 
\end{enumerate}
\label{thm_arrow1}
\end{thm}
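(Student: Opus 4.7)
The plan is to apply the Hartman-Grobman theorem by linearizing the reduced three-dimensional system about each steady state and showing that every eigenvalue of the Jacobian $\mathcal{J}$ in Eq. \ref{jac_matrix} has strictly negative real part. A direct computation shows that $f_1$ and $f_3$ are independent of $\mathcal{T}^*$, while $f_2$ is independent of $[mG^*]$, so $\mathcal{J}$ has the block structure
\begin{equation*}
\mathcal{J} = \begin{pmatrix} J_{11} & 0 & J_{13} \\ 0 & J_{22} & J_{23} \\ J_{31} & 0 & J_{33} \end{pmatrix}.
\end{equation*}
Expanding the characteristic polynomial along the second column, one eigenvalue equals $J_{22}$ and the remaining two come from the $2\times 2$ submatrix with entries $J_{11}, J_{13}, J_{31}, J_{33}$. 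Hence it suffices to verify that $J_{22}<0$ together with the conditions of negative trace and positive determinant for this block.

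Differentiating $f_2$ gives $J_{22} = -k_{on}[tGEF_{tot}]\widehat{\mathcal{G}^*} - k_{off}[tGAP^*]$, which is strictly negative at both steady states. For steady state 1 I would use that $\widehat{[mG^*]}=0$ to obtain $J_{33}=-k_{on}\widehat{[mG^*]}=0$ and $J_{13}=-k_{on}[mGEF^*][tGEF_{tot}]$, while the hypothesis $C<[tGEF_{tot}]$ makes $J_{31}=k_{on}(1-C/[tGEF_{tot}])$ strictly positive. The trace of the block reduces to $J_{11}=-k_{on}[mGEF^*]-k_{off}[mGAP^*]-k_{on}([tGEF_{tot}]-C)<0$, and its determinant evaluates to $-J_{13}J_{31}=k_{on}^2[mGEF^*]([tGEF_{tot}]-C)>0$; both conditions hold strictly, so the $2\times 2$ block contributes two eigenvalues with negative real part. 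For steady state 2 the identity $\widehat{\mathcal{G}^*}=1$ forces $J_{31}=k_{on}(1-\widehat{\mathcal{G}^*})=0$, so $\mathcal{J}$ is block upper triangular and its spectrum consists of the diagonal entries $J_{11}=-k_{on}[mGEF^*]-k_{off}[mGAP^*]$, $J_{22}$, and $J_{33}=-k_{on}\widehat{[mG^*]}$. The first two are manifestly negative, and $J_{33}<0$ precisely because $C>[tGEF_{tot}]$ implies $\widehat{[mG^*]}>0$ via \Cref{prop_arrow1}.

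No step presents a serious obstacle; the principal points of care are tracking signs in the Jacobian entries and recognizing that the two strict inequalities on $C$ from \Cref{prop_arrow1} are exactly what makes the stability conditions hold strictly. In the degenerate boundary case $C=[tGEF_{tot}]$ the two steady states coalesce, the determinant of the $2\times 2$ block (for steady state 1) and the entry $J_{33}$ (for steady state 2) both vanish, so the linearization ceases to be hyperbolic and Hartman-Grobman no longer applies; this case is excluded by the strict hypotheses of the theorem.
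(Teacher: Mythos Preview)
Your proposal is correct and follows essentially the same route as the paper: both arguments linearize the reduced three-dimensional system, isolate the eigenvalue $J_{22}=-k_{on}[tGEF_{tot}]\widehat{\mathcal{G}^*}-k_{off}[tGAP^*]$ coming from the $\mathcal{T}^*$-direction, and then treat the remaining $2\times 2$ block by a trace--determinant argument for steady state~1 and by triangularity for steady state~2. Your presentation is slightly more structural in that you first record the block form of $\mathcal{J}$ valid at both equilibria before specializing, whereas the paper writes out the full $3\times 3$ Jacobian separately in each case; the computations and the resulting eigenvalue expressions coincide.
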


\begin{proof}	
\noindent
 To simplify our notation, we introduce $k_{mGEF} = k_{on} [mGEF^*]$, $k_{mGAP} = k_{off} [mGAP]$, and $k_{tGAP} = k_{off} [tGAP^*]$. All calculations were done with MATLAB's R2019b symbolic toolbox and we proceed with the analysis of each case separately. 
\begin{enumerate}

\item Suppose $C < [tGEF_{tot}]$. As we have seen in the previous section, in this case  the steady state is given by Eq. \ref{ss1}.  The Jacobian matrix (Eq. \ref{jac_matrix})  is given by 
 
 \small
 \begin{equation*}
 \begin{bmatrix} 
 k_{on}\left(C - [tGEF_{tot}]\right) - k_{mGEF} - k_{mGAP}  & 0 & -k_{mGEF}[tGEF_{tot}] \\[0.5cm]
 0  & - C k_{on} - k_{tGAP} &  \frac{k_{tGAP} k_{on} [tGEF_{tot}]}{C k_{on} + k_{tGAP}} \\[0.5cm]
 -k_{on} \left(\frac{C}{[tGEF_{tot}]} - 1\right )& 0 &  0
 \end{bmatrix}.
 \end{equation*}

\normalsize

The first eigenvalue in this case is given by $\lambda_1 = - C k_{on} - k_{tGAP}$ and the other two ($\lambda_2$ and $\lambda_3$) are such that  
$$\lambda_2 + \lambda_3 = k_{on} (C - [tGEF_{tot}] ) - k_{mGAP} - k_{mGEF} 
 < 0$$ and $$\lambda_2 \lambda_3 = -k_{on} k_{mGEF} (C - [tGEF_{tot}]) >  0 $$ 
 from which we conclude that $\lambda_2$ and $\lambda_3$ are both negative and therefore the steady state is LAS.
 
 \item 	Suppose now that $C > [tGEF_{tot}]$. Following our previous analysis,  the steady state is given by Eq. \ref{ss2}.  The Jacobian matrix in this case is  given by 

\small
 \begin{equation*}
 \begin{bmatrix} 
- k_{mGAP} - k_{mGEF}  & 0 &  k_{mGEF} [tGEF_{tot}] \left(  \frac{ k_{on} \left(C - tGEF_{tot}\right)}{k_{mGAP} + k_{mGEF}} -  1\right) \\[0.5cm]
 0  & - k_{tGAP} - k_{on} [tGEF_{tot}] &  \frac{k_{tGAP} k_{on}  [tGEF_{tot}]}{k_{tGAP} + k_{on} [tGEF_{tot}]} \\[0.5cm]
  0 &  0 & -\frac{k_{on} k_{mGEF} (C - [tGEF_{tot}])}{k_{mGAP} + k_{mGEF}}
 \end{bmatrix}
 \end{equation*}

\normalsize
and the eigenvalues are given by $ \lambda_1 =- k_{tGAP} - k_{on} [tGEF_{tot}]$ ,$\lambda_2 = - k_{mGEF} - k_{mGAP} $ 
and $\lambda_3 = -\frac{k_{on} k_{mGEF}(C - [tGEF_{tot}])}{k_{mGAP} + k_{mGEF}}$, which are all negative and this completes the proof.
 \end{enumerate} 
\end{proof}

\subsubsection*{Biological interpretation of the stability features of the feedforward connection $mG^* \to  tGEF$.}

The feedforward connection from mG* to tGEF allows for the emergence of the steady state 1 (Eq. \ref{ss1}) with zero mG and mG* values.
This zero concentration can be interpreted as a scenario in which  nearly all the available mG proteins are activated to mG*, and that nearly all the mG* species have successfully engaged with the available tGEFs, thereby maximally recruiting tGEF on the Golgi membranes. 
The inequality $C<[tGEF_{tot}]$ must hold for existence and local asymptotic stability to steady state 1. Recalling the definition of $\mathcal{G}^*$ and that Eq. \ref{constrain} holds for all times, including $t=0$, this relationship between  $C$ and $[tGEF_{tot}]$ can be rewritten as $[mG](0) +  [mG^*](0) <[tGEF](0)$
where $[tGEF](0) = [tGEF_{tot}] - [tGEF^*](0)$ is initial concentration of cytosolic tGEF that is yet to be recruited by mG* to the membranes.
Thus, if the total amount of mG protein is initially less than the concentration of tGEF in cells, Theorem \ref{thm_arrow1} ensures that the steady state 1 with no mG and mG* will emerge. Moreover, the reduced system (Eqs. \ref{model_red_3:1}--\ref{constrain}) will converge to the steady state 1, provided the initial and steady state concentration values are sufficiently close.
Similarly, the steady state 2 will exist when $[mG](0) +  [mG^*](0) > [tGEF](0)$. This can be interpreted as a scenario where the total amount of mG protein concentration is higher than the total concentration of tGEF in cells.  In this case, the reduced system will converge to steady state 2  where some distribution of  mG, mG*, tG, tG* are present (Eq. \ref{ss2}), given sufficiently close  initial and steady state concentration values.

Fig.\ref{Fig2} illustrates the two possible steady states (gray-colored  ``1'' in the $2 \times 2$ table) promoted by the feedforward connection. steady state 1 can be interpreted as a configuration where the copy numbers of both active and inactive mGTPase are low, while the tGTPase copy numbers remain high. On the other hand, in steady state 2 both m- and tGTPases have high copy numbers in both their active and inactive forms. 
Our results suggest that the feedforward from mG to tGEF, which initiates the coupling between the two G protein switches, can drive the system to two possible configurations depending on the cellular concentrations of total mG and tGEF. 
If the initial tGEF is larger than the total mG, the feedforward connection will result in a significant decrease of the total mG and result in the activation of a fraction of the tGEF ($\widehat{\mathcal{G}^*} = \frac{C}{[tGEF_{tot}]}$ in Eq. \ref{ss1}). 
On the contrary, if the initial tGEF is less than the total mG, then the available tGEF will be fully engaged ($\widehat{\mathcal{G}^*} = 1$ in Eq. \ref{ss2}) and there will be a residual mG concentration in the system. 
We conclude that the initial difference between the copy numbers of total mG and tGEF (a cytosolic protein that is recruited to the membrane by mG*) is the main factor that will determine the steady state of the coupled GTPase switches. 

\color{black}
\subsection{Feedforward connection $mG^* \to tGEF$ with feedback loop $tGEF \to mGAP$:  Recruitment of tGEF by active mGTPases and tGEF colocalization with mGAP.}
\label{subsec_Arrows12}

We analyze the case where the feedback loop $tGEF \to mGAP$  (arrow 2 in Fig.\ref{Fig1}B) is added to the system with the feedforward connection. In cells, this feedback loop represents a  tGEF* colocalization with mGAP on Golgi membranes that facilitates the recruitment of GAP proteins \cite{lo2015activation}. In this case, the model equations are given by the following system:

\small
\begin{eqnarray}
	\frac{d[mG]}{dt}&=& -  k_{on}[mGEF^*][mG] + k_{off}[mGAP^*][mG^*] \label{model_12:1}\\
	\frac{d[mG^*]}{dt}&=&  k_{on}[mGEF^*][mG] - k_{off}[mGAP^*][mG^*] - k_{on}[tGEF][mG^*] \label{model_12:2}\\
	\frac{d[tG]}{dt}&=& -  k_{on}[tGEF^*][tG] + k_{off}[tGAP^*][tG^*]  \label{model_12:3}\\
	\frac{d[tG^*]}{dt} &=&  k_{on}[tGEF^*][tG]  -  k_{off}[tGAP^*][tG^*]  \label{model_12:4}\\
	\frac{d[tGEF]}{dt}&=& - k_{on}[tGEF][mG^*] \label{model_12:5}\\
	 \frac{d[tGEF^*]}{dt}&=&  k_{on}[tGEF][mG^*] - k_{on} [tGEF^*][mGAP] \label{model_12:6}\\
	 \frac{d[mGAP]}{dt}&=&  - k_{on} [mGAP] [tGEF^*]   \label{model_12:7}\\
	  \frac{d[mGAP^*]}{dt}&=&  k_{on} [mGAP] [tGEF^*] \label{model_12:8}
\end{eqnarray}

\normalsize
where we keep all activation and inactivation rates at the same value ($k_{on}$ and $k_{off}$ respectively), as done in \Cref{subsec_Arrow1}. 
As we did in the previous section, we first analyze the conservation laws of this particular system. In this case, the stoichiometric matrix (Eq. \ref{stoich_matrix}) is $8 \times 6$.

\subsubsection*{Conservation Laws.}

We begin by observing that the total amount of tGTPase is conserved in this system. 
Thus we may use the fraction $\mathcal{T}^*$ as in \Cref{subsec_Arrow1} and that is the first conservarion law. 
The total amount of mGAP is also conserved, as we sum  Eqs. \ref{model_12:7} and \ref{model_12:8}. 
We can then write 
\begin{equation}
 [mGAP]  = [mGAP_{tot}] - [mGAP^*]
\label{mGAP_cons}
\end{equation}

and substitute the above expression for $[mGAP]$ in Eqs. \ref{model_12:6} and \ref{model_12:8}.  We choose to keep the concentrations of mGAP as a variable for notational simplicity and do not define its fraction. Summing Eqs. \ref{model_12:1}, \ref{model_12:2}, \ref{model_12:6}, and \ref{model_12:8}, and integrating over time, we get 
\begin{equation}
    [mG] + [mG^*] + [tGEF^*] + [mGAP^*] = C_1
    \label{constrain1_12}
\end{equation}

where $C_1 \geq 0$ is constant over time. Moreover, Eqs. \ref{model_12:5}, \ref{model_12:6}, and \ref{model_12:8} when summed and integrated give 
\begin{equation}
    [tGEF] + [tGEF^*] + [mGAP^*] = C_2
\label{constrain2_12}
\end{equation}
for $C_2 \geq 0$ also constant. We compute the left null space of the stoichiometric matrix (Eq. \ref{stoich_matrix}) and confirm the total of four conservation laws in this case. 
The reduced system is given by the following equations:

\small
\begin{eqnarray}
	\frac{d[mG]}{dt}&=& -  k_{on}[mGEF^*][mG] + k_{off}[mGAP^*][mG^*] \label{model_red12:1}\\
	\frac{d[mG^*]}{dt}&=&  k_{on}[mGEF^*][mG] - k_{off}[mGAP^*][mG^*] - k_{on}[tGEF][mG^*] \label{model_red12:2}\\
	\frac{d\mathcal{T}^*}{dt} &=&  k_{on}[tGEF^*](1- \mathcal{T}^*)  -  k_{off}[tGAP^*]\mathcal{T}^*  \label{model_red12:3}\\
	\frac{d[tGEF]}{dt}&=& - k_{on}[tGEF][mG^*] \label{model_red12:4}\\
	 \frac{d[tGEF^*]}{dt}&=&  k_{on}[tGEF][mG^*] - k_{on} [tGEF^*]\left( [mGAP_{tot}] - [mGAP^*] \right) \label{model_red12:5}\\
	   \frac{d[mGAP^*]}{dt}&=&  k_{on}  \left( [mGAP_{tot}] - [mGAP^*] \right) [tGEF^*] \label{model_red12:6}
\end{eqnarray}
\normalsize

with the conservation laws given by Eqs. \ref{constrain1_12} and \ref{constrain2_12}. In what follows, we calculate the steady states of the system. 

\subsubsection*{Steady states and local stability analysis.} 

We begin by introducing $k = \frac{k_{off}}{k_{on}}$ to simplify our notation. 
To find the steady states,  we must find nonnegative solutions of the following system: 

\small
\begin{eqnarray}
	  - [mGEF^*]\widehat{[mG]} + k \widehat{[mGAP^*]}\widehat{[mG^*]} &=& 0 \label{ss_red12:1}\\
      \widehat{[tGEF^*]}(1- \widehat{\mathcal{T}^*})  -  k [tGAP^*] \widehat{\mathcal{T}^*}  &=& 0 \label{ss_red12:3}\\
	    \widehat{[tGEF]}\widehat{[mG^*]} &=& 0 \label{ss_red12:4}\\
	      \left( [mGAP_{tot}] - \widehat{[mGAP^*]} \right) \widehat{[tGEF^*]} &=& 0 \label{ss_red12:5} \\
	      \widehat{[mG]} + \widehat{[mG^*]} + \widehat{[tGEF^*]} + \widehat{[mGAP^*]}  &=& C_1 \label{ss_red12:6} \\
	      	      \widehat{[tGEF]} + \widehat{[tGEF^*]} + \widehat{[mGAP^*]}  &=& C_2 \label{ss_red12:7}
\end{eqnarray}
\normalsize

From Eq. \ref{ss_red12:4}, we must have  $ \widehat{[tGEF]}=0$ or $\widehat{[mG^*]}=0$. Moreover, from Eq. \ref{ss_red12:5},  $\widehat{[tGEF^*]} = 0$ or $\widehat{[mGAP^*]}=[mGAP_{tot}]$ and thus we have four possible combinations to analyze. 

We study each case separately and obtain the necessary and sufficient inequalities involving the parameters $C_1$, $C_2$, and $[mGAP_{tot}]$ that ensure the existence of each steady state.
As we did in the previous section, we also show that the steady states are LAS provided the strict inequalities are satisfied.
We summarize our analysis in the following theorem, whose proof can be found in Appendix \ref{appendix_12}.

\begin{theorem}
\label{prop_12}
The steady states $$\widehat{\bf{x}} = \left(\widehat{[mG]},\widehat{[mG^*]},\widehat{\mathcal{T}^*},\widehat{[tGEF]},\widehat{[tGEF^*]}, \widehat{[mGAP^*]} \right)$$ of the system given by Eqs.  \ref{constrain1_12} - \ref{model_red12:6} are given by
\begin{itemize}
    \item \underline{Steady state 1:}
\footnotesize
\begin{equation}
\widehat{\bf{x}} =  \left(0,0,\frac{C_1 - [mGAP_{tot}]}{(C_1 - [mGAP_{tot}]) + k [tGAP^*]}, C_2 - C_1,C_1 - [mGAP_{tot}],[mGAP_{tot}] \right)
\label{ss1_12_new}
\end{equation} 
\normalsize
if and only if $C_2 \geq C_1$ and $C_1 \geq [mGAP_{tot}]$. The steady state is LAS if $C_2 > C_1$ and $C_1 > [mGAP_{tot}]$.
\item \underline{Steady state 2:}
\footnotesize
\begin{eqnarray}
\widehat{\bf{x}} &=& \left(\frac{k [mGAP_{tot}]\left( C_1 - C_2 \right)}{[mGEF^*] + k [mGAP_{tot}]}, \frac{[mGEF^*] \left( C_1 - C_2 \right)}{[mGEF^*] + k [mGAP_{tot}] }, \right. \nonumber\\
&& \left. \frac{C_2 - [mGAP_{tot}]}{ \left(C_2 - [mGAP_{tot}]\right) + k [tGAP^*]} ,0,C_2 - [mGAP_{tot}], [mGAP_{tot}] \right)
\label{ss2_12}
\end{eqnarray} 
\normalsize
if and only if  $C_1 \geq C_2$ and $C_2 \geq [mGAP_{tot}]$. The steady state is LAS if  $C_1 > C_2$ and $C_2 > [mGAP_{tot}]$.
\item \underline{Steady state 3:} 
\footnotesize
\begin{equation}
\widehat{\bf{x}} =  \left(0,0,0,C_2-C_1,0,C_1\right)
\label{ss3_12}
\end{equation} 
\normalsize
if and only if  $C_2 \geq C_1$ and $C_1 \leq [mGAP_{tot}]$. The steady state is LAS if  $C_2 > C_1$ and $C_1 < [mGAP_{tot}]$.
\item \underline{Steady state 4:} 
\footnotesize
\begin{equation}
\widehat{\bf{x}} = \left(\frac{k C_2}{[mGEF^*]+k C_2} \left( C_1- C_2\right),\frac{[mGEF^*]}{[mGEF^*]+k C_2} \left( C_1- C_2\right), 0,0,0, C_2 \right)
\label{ss4_12_new}
\end{equation} 
\normalsize
if and only if $C_1 \geq C_2$ and  $C_2 \leq [mGAP_{tot}]$. The steady state is LAS if  $C_1 > C_2$ and $C_2 < [mGAP_{tot}]$.
\end{itemize}
\end{theorem}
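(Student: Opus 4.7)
I would mirror the two-stage strategy used for Theorem \ref{thm_arrow1}: first enumerate the candidate steady states by exploiting the product structure of Eqs. \ref{ss_red12:4}--\ref{ss_red12:5}, then verify local asymptotic stability of each one via the Jacobian of the reduced system under the claimed strict inequalities, invoking the Hartman--Grobman theorem. Throughout, the two conservation laws Eqs. \ref{ss_red12:6}--\ref{ss_red12:7} will be used to eliminate two variables and reduce the effective state space to four dimensions.

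\emph{Case enumeration.} The equations $\widehat{[tGEF]}\widehat{[mG^*]}=0$ and $([mGAP_{tot}]-\widehat{[mGAP^*]})\widehat{[tGEF^*]}=0$ split the analysis into four combinations of vanishing factors: (i) $\widehat{[mG^*]}=\widehat{[tGEF^*]}=0$; (ii) $\widehat{[mG^*]}=0$ with $\widehat{[mGAP^*]}=[mGAP_{tot}]$; (iii) $\widehat{[tGEF]}=\widehat{[tGEF^*]}=0$; (iv) $\widehat{[tGEF]}=0$ with $\widehat{[mGAP^*]}=[mGAP_{tot}]$. In each case I first use Eq. \ref{ss_red12:1} to solve for $\widehat{[mG]}$ in terms of $\widehat{[mG^*]}$ (or to conclude it vanishes when $\widehat{[mG^*]}=0$), then use Eq. \ref{ss_red12:3} to extract $\widehat{\mathcal{T}^*}$ from $\widehat{[tGEF^*]}$, and finally close the system using the two conservation laws. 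Matching the vanishing and non-vanishing coordinates case by case produces the four explicit formulas in the statement, and the requirement that every coordinate be nonnegative translates directly into the stated inequalities on $C_1$, $C_2$, and $[mGAP_{tot}]$.

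\emph{Local asymptotic stability.} After using the two conservation laws to eliminate $[mG]$ and $[tGEF]$, the dynamics reduce to a four-dimensional system in $([mG^*], \mathcal{T}^*, [tGEF^*], [mGAP^*])$. I will compute the Jacobian $\mathcal{J}$ at each of the four steady states symbolically, as in Theorem \ref{thm_arrow1}. In every case at least two of $\widehat{[mG^*]}$, $\widehat{[tGEF]}$, $\widehat{[tGEF^*]}$, and $[mGAP_{tot}]-\widehat{[mGAP^*]}$ vanish, which forces a block-triangular structure on $\mathcal{J}$. The $\mathcal{T}^*$ coordinate only couples weakly from $[tGEF^*]$, producing one eigenvalue of the form $-k_{on}\widehat{[tGEF^*]}-k_{off}[tGAP^*]$ (strictly negative), so it can be separated off immediately. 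The remaining $3\times 3$ block will itself be further block-triangular in each case, and I will check that the trace of each diagonal sub-block is strictly negative and each $2\times 2$ determinant strictly positive under the stated strict inequalities. This delivers eigenvalues with strictly negative real parts and hence LAS by Hartman--Grobman.

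\emph{Main obstacle.} The chief difficulty is bookkeeping rather than any single delicate estimate: ensuring that the nonnegativity constraints translate cleanly into the stated $\geq$-inequalities and that each strict inequality in the LAS statement corresponds to exactly the diagonal entry of $\mathcal{J}$ that needs to be nonzero. The most delicate block is the $([mG^*],[tGEF^*])$ coupling through the feedforward--feedback chain, since the off-diagonal entry $-k_{on}\widehat{[tGEF]}$ toggles on and off across the four cases; verifying that the corresponding $2\times 2$ determinant is positive under the appropriate strict inequality in each case, and isolating which of $C_1>C_2$, $C_2>[mGAP_{tot}]$, or their counterparts is the one that rescues non-degeneracy, is where the analysis requires the most care.
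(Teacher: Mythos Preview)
Your proposal is correct and follows essentially the same approach as the paper: the same four-case split from the two product equations, the same use of Eq.~\ref{ss_red12:1} and Eq.~\ref{ss_red12:3} together with the conservation laws to solve out each case, and the same reduction to a four-dimensional system in $([mG^*],\mathcal{T}^*,[tGEF^*],[mGAP^*])$ for the Jacobian analysis, where the eigenvalues are read off either directly or via trace/determinant of $2\times 2$ sub-blocks. The only point to flag is that the theorem asserts \emph{if and only if}, and your write-up emphasizes necessity (nonnegativity $\Rightarrow$ inequalities); the paper additionally argues the converse in each case (e.g., $C_1\geq C_2$ forces $\widehat{[tGEF]}=0$ by subtracting the two conservation laws and using nonnegativity), so be sure to include that short sufficiency check when you write it out.
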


\color{black}
\subsubsection*{Biological interpretation of the stability features of the feedforward connection $mG^* \to tGEF$  with single feedback loop $tGEF \to mGAP$.}

The feedforward connection $mG^* \to tGEF$  together with the feedback loop $tGEF \to mGAP$ (arrows 1 and 2 in Fig.\ref{Fig1}B, respectively) allows for the emergence of four steady states. steady states 1 and 2 (Eqs. \ref{ss1_12_new} and \ref{ss2_12}) are similar to the two steady states obtained in \Cref{subsec_Arrow1}, although with different concentration values. On the other hand, steady states 3 and 4 (Eqs. \ref{ss3_12} and Eqs. \ref{ss4_12_new}) newly emerge in the system, both with tG* attaining zero concentration.  
This zero concentration can be interpreted as a scenario in which nearly all the available tGTPase is inactivated. 
Recalling the definitions of $C_1$ and $C_2$ and the fact that Eqs. \ref{constrain1_12} and \ref{constrain2_12} hold at all times, including at $t=0$, we can write  $C_1 =  [mG](0) + [mG^*](0) + [tGEF^*](0) + [mGAP^*](0)$ and  $C_2  =  [tGEF](0) + [tGEF^*](0) + [mGAP^*](0)$. In this way, from the inequalities obtained in \Cref{prop_12} for $C_1$ and $C_2$, we obtain relationships among the initial conditions of the original system (Eqs. \ref{model_12:1} -- \ref{model_12:8}) that are associated with each one of the four steady states.

For the existence and local asymptotic stability of steady state 1 (Eq. \ref{ss1_12_new}), where mG and mG* have zero concentration values, the inequalities $C_2 > C_1$ and $C_1 >[mGAP_{tot}]$
must hold. The first inequality can be written as $[mG](0) +  [mG^*](0) <[tGEF](0)$, which was obtained in \Cref{subsec_Arrow1} as the existence condition for the steady state with no mG and mG* (Eq. \ref{ss1}). On the other hand, the inequality $C_1 > [mGAP_{tot}]$ can be written as $ [mG](0) + [mG^*](0) + [tGEF^*](0) > [mGAP](0)$, where $[mGAP](0)$ is the initial concentration of cytosolic mGAP that is yet to be recruited by tGEF* to the membranes. Therefore, two conditions guarantee the existence of steady state 1: (1) The total amount of mG protein must be initially less than the concentration of tGEF and (2) The sum of the  concentrations of total mG protein and tGEF* must be initially higher than the concentration of mGAP. If both conditions hold, then \Cref{prop_12} ensures that steady state 1 will emerge and the reduced system (Eqs. \ref{model_red12:1} -- \ref{model_red12:6} along with Eqs. \ref{constrain1_12} and \ref{constrain2_12}) will converge to the steady state 1, provided the initial and steady state concentration values are sufficiently close. 

A similar analysis holds for steady states 2, 3 and 4. For simplicity, we present the required initial conditions for each steady state without repeating the conclusions that follows from \Cref{prop_12}.  For steady state 2 ((Eq. \ref{ss2_12})), where mG, mG*, tG, and tG* are present, the inequalities  $C_1 > C_2$ and $C_2 >[mGAP_{tot}]$ become $[mG](0) +  [mG^*](0) >[tGEF](0)$ and $[tGEF](0) + [tGEF^*](0) > [mGAP](0)$, respectively. Hence, the total amount of mG protein must be initially higher than the concentration of tGEF and the total amount of tGEF must be initially higher than concentration of mGAP. For steady state 3 (Eq. \ref{ss3_12}), where mG and mG* have zero concentration values and the tGTPase is fully inactivated, the inequalities  $C_2 > C_1$ and $C_1 < [mGAP_{tot}]$ become $[mG](0) + [mG^*](0) <[tGEF](0)$ and $ [mG](0) + [mG^*](0) + [tGEF^*](0) < [mGAP](0)$, respectively. Hence, the total amount of mG protein must be initially less than the concentration of tGEF and the sum of the concentrations of total mG protein and tGEF* must be initially less than the concentration of mGAP. For steady state 4 (Eq. \ref{ss4_123_new}), where mG and mG* are present and tG* concentration is zero, the inequalities $C_1 > C_2$ and $C_2 < [mGAP_{tot}]$ become $[mG](0) +  [mG^*](0) >[tGEF](0)$ and $[tGEF](0) + [tGEF^*](0) < [mGAP](0)$, respectively. Hence, the total amount of mG protein must be initially higher than the concentration of tGEF and the total amount of tGEF must be initially less than the concentration of mGAP.

Fig.\ref{Fig2} illustrates the four possible steady states (gray-colored ``1+2'' in the $2 \times 2$ table) promoted by the feedforward connection $mG^* \to tGEF$ and the feedback loop $tGEF \to mGAP$. Steady states 1 and 2 have the same interpretation of the two steady states obtained in \Cref{subsec_Arrow1}. On the other hand, steady states 3 and 4 were obtained through the sole contribution of the feedback loop $tGEF \to mGAP$. These states share the common feature of having tGTPase fully inactivated. However,  steady state 3 can be interpreted as a configuration where the copy numbers of mG and mG* are low, while in steady state 4, these copy numbers are high. 

\subsection{Feedforward connection $mG^* \to tGEF$ with feedback loops $tGEF \to mGAP$ and $tG^* \to mGAP$ :  Recruitment of tGEF by active mGTPases, tGEF colocalization with mGAP, and activation of mGAP by active tGTPases.}
\label{subsec_Arrows123}

We analyze the case where the feedback loop $tG^* \to mGAP$ (arrow 3 in Fig.\ref{Fig1}B) is added to the system with the feedforward connection and feedback loop  $tGEF \to mGAP$. This connection represents the release of free $G\beta\gamma$ promoting mGAP activation. We analyze the full system given by  Eqs. \ref{model_mod:1} -- \ref{model_mod:8} in the case where where all rates are equal. The model equations are thus given by the following system:

\small
 	\begin{eqnarray}
 	\frac{d[mG]}{dt}&=& -  k_{on}[mGEF^*][mG] + k_{off}[mGAP^*][mG^*] \label{model_123:1} \\
 	\frac{d[mG^*]}{dt}&=&k_{on}[mGEF^*][mG] - k_{off}[mGAP^*][mG^*] - k_{on}[tGEF][mG^*] \label{model_123:2} \\
 	\frac{d[tG]}{dt}&=& -  k_{on}[tGEF^*][tG] + k_{off}[tGAP^*][tG^*] \label{model_123:3} \\
 	\frac{d[tG^*]}{dt} &=&  k_{on}[tGEF^*][tG]  -  k_{off}[tGAP^*][tG^*] - k_{on}[mGAP][tG^*] \label{model_123:4} \\
 	\frac{d[tGEF]}{dt}&=& - k_{on}[tGEF][mG^*] \label{model_123:5} \\
 	\frac{d[tGEF^*]}{dt}&=&  k_{on}[tGEF][mG^*] - k_{on} [mGAP] [tGEF^*] \label{model_123:6}  \\
 	\frac{d[mGAP]}{dt}&=&  - k_{on}[mGAP][tG^*] - k_{on} [tGEF^*][mGAP] \label{model_123:7} \\
 	\frac{d[mGAP^*]}{dt}&=&  k_{on}[mGAP][tG^*] + k_{on} [tGEF^*][mGAP]. \label{model_123:8}
 	\end{eqnarray}
\normalsize

In what follows, we compute the conservation laws and four 1-parameter steady state families  for the system given by  Eqs. \ref{model_123:1} -- \ref{model_123:8}. 
We also obtain the necessary conditions for the conserved quantities that guarantee the existence of each steady state family.  

\subsubsection*{Conservation Laws.} 

As in \ref{subsec_Arrows12}, we also observe that the total amount of mGAP is constant over time, so  Eq. \ref{mGAP_cons} still holds. 
On the other hand, the total tGTPase follows a new conservation law that we derive here. Summming  Eqs. \ref{model_123:1} -- \ref{model_123:4}, \ref{model_123:6}, and \ref{model_123:8}, we have
\begin{equation}
 [mG] + [mG^*] +  [tG] + [tG^*] + [tGEF^*] + [mGAP^*] = \tilde{C_1}. 
    \label{constrain1_123}
\end{equation}
and summing Eqs. \ref{model_123:3} -- \ref{model_123:6} and  Eq. \ref{model_123:8} and integrating over time, we obtain 
\begin{equation}
   [tG] + [tG^*] + [tGEF] + [tGEF^*] + [mGAP^*] = \tilde{C_2} 
   \label{constrain2_123}
\end{equation}
where $\tilde{C_1}$ and $\tilde{C_2}$ must be nonnegative constants. We compute the left null space of the stoichiometric matrix (Eq. \ref{stoich_matrix}) and confirm the total of three conservation laws, which are given by  Eqs. \ref{mGAP_cons}, \ref{constrain1_123}, and  \ref{constrain2_123}. These equations reduce Eqs. \ref{model_123:1} -- \ref{model_123:8} to a five-dimensional system, whose steady states can be obtained.

\subsubsection*{Steady states.} 

 We  compute the steady states of the system when the time derivatives in Eqs. \ref{model_123:1} -- \ref{model_123:8} are equal to zero.
Denoting $k = \frac{k_{off}}{k_{on}}$ as in the previous section and removing the linearly dependent equations, the problem reduces to finding the nonnegative solutions of the following system:

\small
\begin{eqnarray}
	  - [mGEF^*]\widehat{[mG]} + k \widehat{[mGAP^*]}\widehat{[mG^*]} &=& 0 \label{ss_red123:1}\\
    -  \widehat{[tGEF^*]} \widehat{[tG]}  +  k [tGAP^*] \widehat{[tG^*]}  &=& 0 \label{ss_red123:2}\\
     \left( [mGAP_{tot}] - \widehat{[mGAP^*]} \right) \widehat{[tG^*]} &=& 0 \label{ss_red123:3} \\
	    \widehat{[tGEF]}\widehat{[mG^*]} &=& 0 \label{ss_red123:4}\\
	      \widehat{[tGEF^*]}  \left( [mGAP_{tot}] - \widehat{[mGAP^*]} \right) &=& 0 \label{ss_red123:5} 
\end{eqnarray}
\normalsize

along with the conservation laws given by Eqs. \ref{mGAP_cons}, \ref{constrain1_123}, and \ref{constrain2_123}. 

Eq. \ref{ss_red123:2} gives $tG^* = \frac{\widehat{[tGEF^*]} \widehat{[tG]}}{k [tGAP^*]}$ and Eq. \ref{ss_red123:3} then becomes
$$  \left( [mGAP_{tot}] - \widehat{[mGAP^*]} \right) \widehat{[tGEF^*]} \widehat{[tG]} = 0.$$
From Eq. \ref{ss_red123:5}, we conclude that $\widehat{[tG]}$ can be any nonnegative real number satisfying Eqs. \ref{constrain1_123} and \ref{constrain2_123}. We define  $\xi = \widehat{[tG]}$ and characterize four $\xi$-dependent families of steady states similarly as we did in \Cref{subsec_Arrows12}.  We summarize our results in the following theorem, whose proof can be found in the appendix \ref{appendix_123}.

\begin{theorem}
\label{prop_123}
The $\xi$-dependent families of steady states $$\widehat{\bf{x}}_{\xi} = \left(\widehat{[mG]},\widehat{[mG^*]},\widehat{[tG]},\widehat{[tG^*]},\widehat{[tGEF]},\widehat{[tGEF^*]}, \widehat{[mGAP^*]} \right)$$ of the system given by Eqs. \ref{ss_red123:1} -- \ref{ss_red123:5} with the conservation laws given by Eqs. \ref{mGAP_cons}, \ref{constrain1_123}, and \ref{constrain2_123} are given by
\begin{itemize}

\item \underline{Family 1:} 
\footnotesize
\begin{eqnarray}
\widehat{\textbf{x}}_{\xi}
&=& \left(0, 0, \xi, \frac{\left(\tilde{C_1} - [mGAP_{tot}] -\xi \right) \xi}{k[tGAP^*] +\xi}, \tilde{C_2} - \tilde{C_1}, \right. \nonumber \\
&& \left. \left(\tilde{C_1} - [mGAP_{tot}] -\xi \right) \frac{k [tGAP^*]}{k[tGAP^*] +\xi}, [mGAP_{tot}] \right.\Bigg)  
\label{ss1_123_new}
\end{eqnarray}
\normalsize
only if $0 \leq \xi + [mGAP_{tot}]\leq \tilde{C_1} \leq \tilde{C_2}$ .

\item \underline{Family 2:}
\scriptsize
\begin{eqnarray}
\widehat{\textbf{x}}_{\xi}
&=& \left(\frac{(\tilde{C_1} - \tilde{C_2}) k[mGAP_{tot}] }{ [mGEF^*] + k[mGAP_{tot}] }, \frac{(\tilde{C_1} - \tilde{C_2}) [mGEF^*]}{ [mGEF^*] + k[mGAP_{tot}]},  \right. \nonumber \\
&& \left. \xi , \frac{\left(\tilde{C_2} - [mGAP_{tot}] -\xi] \right) \xi}{k[tGAP^*] +\xi} ,0, \left(\tilde{C_2} - [mGAP_{tot}] -\xi] \right) \frac{k [tGAP^*]}{k[tGAP^*] +\xi} , [mGAP_{tot}] \right.\Bigg)
\label{ss2_123}
\end{eqnarray}
\normalsize
 only if  $0 \leq \xi + [mGAP_{tot}]\leq \tilde{C_2} \leq \tilde{C_1}$.

\item \underline{Family 3:} 
\footnotesize
\begin{equation}
\widehat{\bf{x}}_{\xi} =  \left(0,0,\xi,0,\tilde{C_2}-\tilde{C_1},0,\tilde{C_1} - \xi \right)
\label{ss3_123}
\end{equation} 
\normalsize
 only if  $\max(0,\tilde{C_1} - [mGAP_{tot}]) \leq \xi \leq \tilde{C_1} \leq \tilde{C_2}.$

    \item \underline{Family 4:}
\footnotesize
\begin{eqnarray}
\widehat{\textbf{x}}_{\xi}
&=& \left(\frac{(\tilde{C_1} - \tilde{C_2}) k(\tilde{C_2} - \xi) }{ [mGEF^*] + k(\tilde{C_2} - \xi) }, \frac{(\tilde{C_1} - \tilde{C_2}) [mGEF^*]}{ [mGEF^*] + k(\tilde{C_2} - \xi)}, \xi ,0,0,0, \tilde{C_2}- \xi \right). 
\label{ss4_123_new}
\end{eqnarray} 
\normalsize
 only if $\max(0,\tilde{C_2} - [mGAP_{tot}]) \leq \xi \leq \tilde{C_2} \leq \tilde{C_1}$.
\end{itemize}
\end{theorem}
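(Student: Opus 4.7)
The plan is to exploit the product structure of Eqs.~\ref{ss_red123:4} and \ref{ss_red123:5}, which force $\widehat{[mG^*]}=0$ or $\widehat{[tGEF]}=0$, and independently $\widehat{[tGEF^*]}=0$ or $\widehat{[mGAP^*]}=[mGAP_{tot}]$. This gives four mutually exclusive cases that correspond one-to-one with the four families. The parameter $\xi:=\widehat{[tG]}$ stays free; everything else is pinned down by Eq.~\ref{ss_red123:1}, the relation $\widehat{[tG^*]} = \xi \widehat{[tGEF^*]}/(k[tGAP^*])$ obtained from Eq.~\ref{ss_red123:2}, the substitution $\widehat{[mGAP]} = [mGAP_{tot}]-\widehat{[mGAP^*]}$ from Eq.~\ref{mGAP_cons}, and the two remaining conservation laws \ref{constrain1_123}--\ref{constrain2_123}.

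The case analysis proceeds as follows. For Family~1 ($\widehat{[mG^*]}=0$, $\widehat{[mGAP^*]}=[mGAP_{tot}]$), Eq.~\ref{ss_red123:1} gives $\widehat{[mG]}=0$; subtracting Eq.~\ref{constrain1_123} from Eq.~\ref{constrain2_123} yields $\widehat{[tGEF]}=\tilde{C_2}-\tilde{C_1}$; then Eq.~\ref{constrain1_123} plus Eq.~\ref{ss_red123:2} produces the stated $\widehat{[tG^*]}$ and $\widehat{[tGEF^*]}$ by solving a $2\times 2$ system. For Family~2 ($\widehat{[tGEF]}=0$, $\widehat{[mGAP^*]}=[mGAP_{tot}]$), Eq.~\ref{ss_red123:1} becomes a proportionality $[mGEF^*]\widehat{[mG]} = k[mGAP_{tot}]\widehat{[mG^*]}$; subtracting the conservation laws gives $\widehat{[mG]}+\widehat{[mG^*]}=\tilde{C_1}-\tilde{C_2}$, which closes the mG subsystem; the tG subsystem is identical to Family~1 with $\tilde{C_1}$ replaced by $\tilde{C_2}$. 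For Family~3 ($\widehat{[mG^*]}=0$, $\widehat{[tGEF^*]}=0$), Eqs.~\ref{ss_red123:1} and \ref{ss_red123:2} force $\widehat{[mG]}=\widehat{[tG^*]}=0$, leaving $\widehat{[mGAP^*]}$ and $\widehat{[tGEF]}$ to be read directly off the two conservation laws. Family~4 ($\widehat{[tGEF]}=0$, $\widehat{[tGEF^*]}=0$) is analogous but now Eq.~\ref{ss_red123:1} couples $\widehat{[mG]}$ to $\widehat{[mG^*]}$ with the $\xi$-dependent coefficient $k(\tilde{C_2}-\xi)/[mGEF^*]$, since $\widehat{[mGAP^*]}=\tilde{C_2}-\xi$ from Eq.~\ref{constrain2_123}.

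The main technical work is not the algebra but the translation of nonnegativity of every component into the advertised inequality envelopes in $(\tilde{C_1},\tilde{C_2},[mGAP_{tot}],\xi)$. Across all four cases, the sign of $\widehat{[mG]},\widehat{[mG^*]}$ or of $\widehat{[tGEF]}$ contributes one of the orderings $\tilde{C_1}\leq\tilde{C_2}$ or $\tilde{C_2}\leq\tilde{C_1}$; the sign of $\widehat{[tG^*]}$ and $\widehat{[tGEF^*]}$ in Families~1--2 contributes $\xi+[mGAP_{tot}]\leq\tilde{C_j}$; and in Families~3--4 the constraint $0\leq\widehat{[mGAP^*]}\leq[mGAP_{tot}]$ (equivalently $\widehat{[mGAP]}\geq 0$ via Eq.~\ref{mGAP_cons}) forces both $\xi\leq\tilde{C_j}$ and $\xi\geq\tilde{C_j}-[mGAP_{tot}]$, producing the $\max(0,\tilde{C_j}-[mGAP_{tot}])$ lower bound together with $\xi\geq 0$. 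I expect the hard part to be purely bookkeeping: carefully separating the strict from the non-strict inequalities and verifying that the degenerate boundary cases (e.g.\ $\tilde{C_1}=\tilde{C_2}$ or $\xi=\tilde{C_j}-[mGAP_{tot}]$) fall on the intersection of the appropriate families, so that the four families together exhaust the nonnegative steady-state set described by Eqs.~\ref{ss_red123:1}--\ref{ss_red123:5}.
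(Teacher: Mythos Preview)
Your proposal is correct and follows essentially the same approach as the paper's proof: the same four-case split coming from the product equations \ref{ss_red123:4}--\ref{ss_red123:5}, the same use of Eq.~\ref{ss_red123:2} to eliminate $\widehat{[tG^*]}$ in terms of $\xi\widehat{[tGEF^*]}/(k[tGAP^*])$, the same subtraction of the two conservation laws to isolate $\widehat{[tGEF]}$ or $\widehat{[mG]}+\widehat{[mG^*]}$, and the same extraction of the inequality envelopes from nonnegativity. Two small remarks: the four cases are exhaustive but not literally mutually exclusive (you note this yourself when discussing boundary overlaps), and you should record explicitly that Eq.~\ref{ss_red123:3} is automatically satisfied once \ref{ss_red123:2} and \ref{ss_red123:5} hold, which is why it never reappears in your case analysis.
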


\subsubsection*{Biological interpretation of the stability features of the feedforward connection $mG^* \to tGEF$  with two feedback loops $tGEF \to mGAP$ and $tG^* \to mGAP$.}

\color{black}
The feedback loop $tG^* \to mGAP$, when added to the feedforward connection $mG^* \to tGEF$ and the feedback loop $tGEF \to mGAP$, allows for the emergence of four steady state families. Those families have the inactive tGTPase with different range values that can be obtained in the steady state analysis. Interestingly, the Families 1 -- 4 resemble the steady states 1 -- 4 from \Cref{subsec_Arrows12}. Family 1 has no mG and mG* at steady state, and both tG and tG* have non-zero steady state values (similarly to steady state 1). Moreover, Family 2 has both m- and tGTPases with non-zero steady states (similarly to steady state 2). For Family 3, mG and mG* steady state values are zero, and the tGTPase is fully inactivated (similarly to steady state 3). Finally, Family 4 has tGTPase is fully inactivated, and both mG and mG* have non-zero steady states (similarly to steady state 4). Recalling the definitions of $\tilde{C_1}$ and $\tilde{C_2}$ and the fact that  Eqs. \ref{constrain1_123} and \ref{constrain2_123} hold at all times, including $t=0$, we can infer necessary relationships among the initial conditions for each steady state family.

The inequality $\tilde{C_1} \leq  \tilde{C_2}$ can be rewritten as $[mG](0) +[mG^*](0) \leq [tGEF](0)$ is necessary for the emergence of Family 1 (Eq. \ref{ss1_123_new}) whith zero mG and mG* values, which can be interpreted as a scenario in which  nearly all the available mG proteins are activated to mG*, and that nearly all the mG* species have successfully engaged with the available tGEFs, thereby maximally recruiting tGEF on the membranes. For Family 1, $[mGAP_{tot}] \leq \tilde{C_1}$ also holds and can be written as $[mGAP](0) \leq [mG](0) + [mG^*](0) + [tG](0) + [tG^*](0) + [tGEF^*](0)$, where $[mGAP](0)$ is the initial concentrations of cytosolic mGAP that is yet to be recruited by tGEF* and tG* to the membranes. Therefore, two initial conditions are necessary for the existence of Family 1: The total amount of mG protein must be initially less than the concentration of tGEF and (2) The summed concentrations of total mG, total tG and tGEF* must be initially higher than the concentration of mGAP. Finally, the inequality $0 \leq \xi + [mGAP_{tot}] \leq \tilde{C}_1$ can be written as $0 \leq \xi \leq  [mG](0) + [mG^*](0) + [tG](0) + [tG^*](0) + [tGEF^*](0) - [mGAP](0)$. Remarkably, we conclude that the initial \emph{balance} between the summed concentrations of total mG, total tG, tGEF* and the available mGAP is the upper bound for the tG concentration, which completely characterizes the necessary conditions for the emergence of Family 1.

A similar analysis can be done for Families 2, 3, and 4. For the existence of Family 2 (Eq. \ref{ss2_123}), where mG, mG* tG, tG* are present (when $\xi>0$), the inequalities $\tilde{C_2} \leq  \tilde{C_1}$ and $[mGAP_{tot}] \leq \tilde{C_2}$ must hold and can be rewritten as $[mG](0) +[mG^*](0) \geq [tGEF](0)$ and $[mGAP](0) \leq [tG](0) + [tG^*](0) + [tGEF](0) [tGEF^*](0)$. Hence, the total amount of mG protein must be initially higher than the concentration of tGEF and the summed concentrations of total tG and total tGEF proteins must be initially higher than the concentration of mGAP. Finally, the inequality $0 \leq \xi + [mGAP_{tot}] \leq \tilde{C}_2$ indicates that initial balance between the summed concentrations of total tG, total tGEF and the available mGAP is the upper bound for the tG concentrations.  For Family 3 (Eq. \ref{ss3_123}), where mG and mG* have zero concentration values and the tGTPase is fully inactivated, the inequality $\tilde{C_1} \leq \tilde{C_2}$ becomes $[mG](0) +[mG^*](0) \leq [tGEF](0)$. As for Family 1, the total amount of mG protein must be initially less than the concentration of tGEF. Moreover, from $\tilde{C}_1 - [mGAP_{tot}] \leq \xi$,  the initial balance between the summed concentrations of total mG, total tG, tGEF* and the available mGAP is the lower bound for the tG concentration. For Family 4 (Eq. \ref{ss4_123_new}), where mG and mG* are present and tG* concentration is zero, $\tilde{C_2} \leq  \tilde{C_1}$ becomes $[mG](0) +[mG^*](0) \geq [tGEF](0)$. As for Family, 2 the total amount of mG protein must be initially higher than the concentration of tGEF. Moreover, from $\tilde{C}_2 - [mGAP_{tot}] \leq \xi$, the  initial balance between the summed concentrations of total tG, total tGEF and the available mGAP is the lower bound for the tG concentrations.

Fig.\ref{Fig2} illustrates the four Families (gray-colored ``1+2+3'' in the $2 \times 2$ table) promoted by the feedforward connection $mG^* \to tGEF$ and the feedback loops $tGEF \to mGAP$ and $tG^* \to mGAP$. Families 1 and 2 have a similar interpretation of the steady states 1 and 2 obtained in \Cref{subsec_Arrow1} and \Cref{subsec_Arrows12}. On the other hand, Families 3 and 4 were obtained through contributions of the feedback loop $tG^* \to mGAP$. These states share the common feature of having tGTPase fully inactivated. As for steady states 3 and 4,   Family 3 can be interpreted as a configuration where the copy numbers of mG and mG* are low, while in Family 4, those copy numbers are high.

\begin{figure}
	\centering
	\includegraphics[scale=0.25]{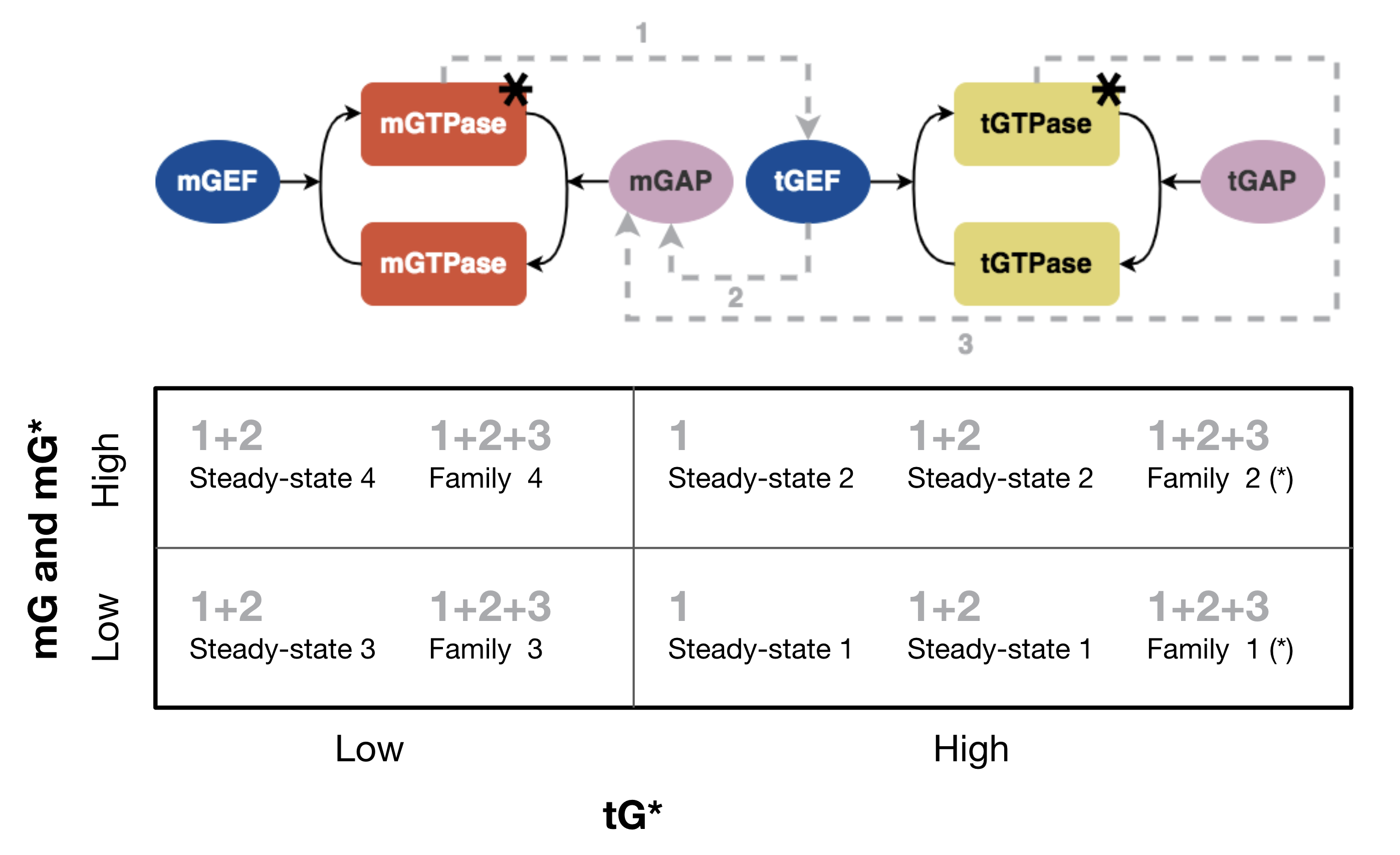}
	\caption{\textbf{Possible steady states promoted by the feedforward connection and the feedback loops}. For the three combinations of arrows (1, 1+2, and 1+2+3) chosen in our study, we calculate the steady state solutions for the coupled GTPase circuit model. We then characterize four steady state configurations that emerge from our analysis, depending on low/high copy numbers of the m- and tGTPases. The feedforward connection $mG^* \to tGEF$ (represented by ``1'') allows for the emergence of two steady states with low/high mG and mG* concentrations, while tG* steady state concentration remained high in both cases. On the other hand, low tG* steady state concentrations were obtained when the feedback loop $tGEF \to mGAP$ was added to the system (represented by ``1+2''). Finally, the feedback loop  $tG^* \to mGAP$ allowed for the emergence of four parametrized families of steady state within the same low/high configurations. (*) Families 2 and 4 can have higher tG* steady states provided the inactive tGTPase concentrations assume strictly positive values.
	 }
 \label{Fig2}
\end{figure}

\subsection{Numerical Simulations.}
\label{num_sim}

To complete our mathematical analysis, we numerically investigate the range of initial conditions in which the trajectories of the system converge to the different steady states. In particular, we illustrate the so-called \emph{basins of attraction} \cite{nusse2012dynamics} of the steady states, considering the same combination of connections between the two GTPase switches from \S \ref{subsec_Arrow1} -- \S \ref{subsec_Arrows123}. 
In Fig.\ref{Fig_BA1}, we explore the case where the two GTPase switches are coupled through the  feedforward connection $mG^* \to tGEF$ ( Fig.\ref{Fig_BA1}A). 
We color the trajectories of the system according to the comparison between the initial conditions $[mG](0) + [mG^*](0)$ and $[tGEF^*](0)$ from the steady state analysis in \Cref{subsec_Arrow1}.
For fixed $[mG^*](0)$ and  $[tGEF](0)$ values, we consider $[mG](0)$ ranging from 0 to 10 $\mu M$ and therefore $[mG](0) + [mG^*](0)$ can be less of higher than $[tGEF](0)$ (blue or red-colored lines and dots). For all simulations, we plot the trajectories of the system until equilibrium is reached. 
In the $\mathcal{G}^* \times [mG_{tot}]$ plane (Fig.\ref{Fig_BA1}B), we observe a linear relationship between these two quantities, where the black arrows indicate the time direction.
If  $[mG](0) + [mG^*](0) < [tGEF](0)$, the system converges to a state where no active mGTPase exists (blue colored trajectories in Figs. \ref{Fig_BA1}B and \ref{Fig_BA1}C). 
On the other hand, if  $[mG](0) + [mG^*](0) > [tGEF](0)$, the system converges a state where  the concentration of the active and inactive mGTPase are positive at the final time (red-colored trajectories). 
To visualize these results in terms of dose-response curves, in Fig.\ref{Fig_BA1}D we plot the final-state values of $[mG_{tot}]$ and $\mathcal{G}^*$ (denoted by s.s) as a function of $[mG](0)$. 
The trajectories in the $\mathcal{T}^* \times [mG_{tot}]$ plane are shown in Fig.\ref{Fig_BA1}E. We observe a detail showing that $\mathcal{T}^*$ reaches a fixed final value around 0.97 when $[mG](0) + [mG^*](0) > [tGEF](0)$ (see magnified view).
We observe that the trajectories converge to steady states that agree with the local stability results from \S \Cref{subsec_Arrow1}.  This suggests that the conditions $[mG](0) + [mG^*](0)<[tGEF](0)$  and $[mG](0) + [mG^*](0)>[tGEF](0)$ are not only valid in a neighborhood of the steady states, but also hold for other initial values satisfying those inequalities.

\begin{figure}
	\centering
	\includegraphics[scale=0.75]{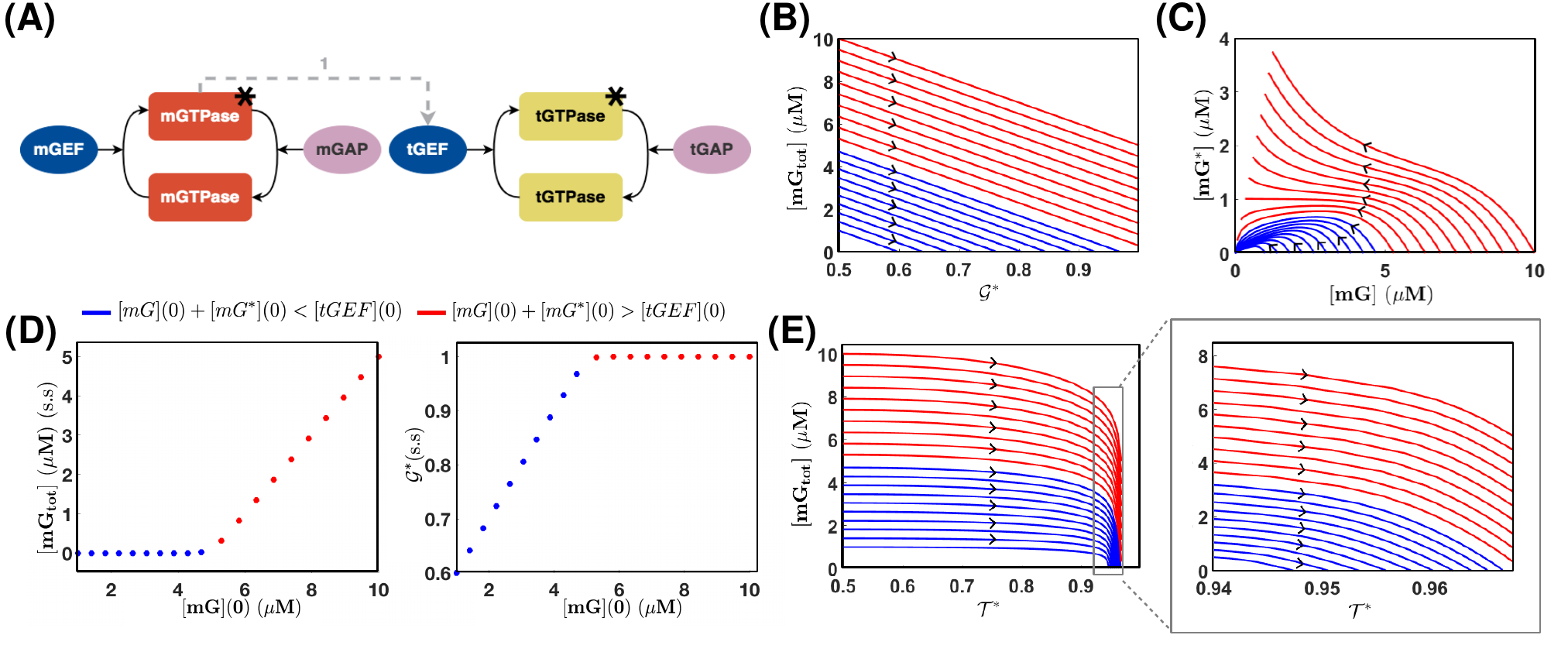}
	\caption{\small{\textbf{Trajectories of the system and steady states (arrow 1} (A) Schematics with the coupled GTPase switches and a feedforward connection $mG^* \to tGEF$, represented by arrow 1. (B)  $[mG](0)$ was changed from 0 to 10 $\mu M$ and the trajectories of the system were calculated until equilibrium was reached. In the $\mathcal{G}^* \times [mG_{tot}]$ plane, a linear relationship emerges. The black arrows indicate the direction of time. If $[mG](0) > 5 \mu M$, the system converges to a final-state where the concentrations of the active and inactive mGTPase are nonzero. On the other hand, when $[mG](0) < 5 \mu M$, the trajectories converge a final-state with no mGTPase exists (blue colored lines). (C) Trajectories of the active ($[mG^∗]$) vs inactive mGTPase ([mG]) for $[mG](0)$. (D) Dose response curves show the steady states (denoted by s.s) for the total mGTPase concentration and fraction of active tGEF ($\mathcal{G}^∗$ ) depending on $[mG](0)$ in the two different scenarios. (E) The dynamics in the $\mathcal{T}^∗ \times [mG_{tot}]$ plane. Parameter values: $[mG^*](0)= 0 \mu M, \mathcal{T}^*(0) = 0.5, \mathcal{G}^*(0) = 0.5, [mGAP^∗] = 1 \mu M, [mGEF^∗] = 1 \mu M, k_{on} = 3 (s.\mu M)^{-1}, k_{off} = 1 (s.\mu M)^{-1}, [tGAP^∗] = 1 \mu M, [tGEF_{tot}] = 10 \mu M, [tG_{tot}] = 10 \mu M$. Simulation time: $5 s$ for panels B and E, $50 s$ for panels C, D, F, and G. Numerical simulations were performed using the solver ode15s in Matlab R2018a. All parameters were arbitrarily chosen only to illustrate the dynamic features of the model.}}
	\label{Fig_BA1}
\end{figure}

Fig.\ref{Fig_panel12} illustrates the dynamics of the system when the feedback loop $tGEF \to mGAP$ (Arrow 2) is added to the feedforward connection (Fig.\ref{Fig_panel12}A). In Fig.\ref{Fig_panel12}B, we plot several $[tGEF^*]$ trajectories starting at $[tGEF^*] = 5 \mu M$ for different $[mG](0)$ and $[mGAP](0)$ values. The resulting rich variety of curves indicate the sensitivity of the system to these initial conditions.  In Fig.\ref{Fig_panel12}C, different dose-response curves are generated to show the steady state tGEF* values. If $[mGAP](0)=0 \mu M$ (blue and red dots), only the feedforward connection affects the system, since mGAP cannot be activated by tGEF*.  When $[mGAP](0)=1$ (green squares),  a similar steady state profile emerges, with $[tGEF^*]$ s.s increasing for $[mG](0) \leq 5 \mu M$ and remaining constant $[mG](0)> 5$. When $[mGAP](0)=8 \mu M$, $[tGEF^*]$ is zero for $[mG](0)<2 \mu M$ and increases until $[mG](0)<5 \mu M $. For $[mG](0)>5$, the steady state achieves its maximum value  slightly above $[tGEF^*]>2$. Finally when $[mGAP](0)=11 \mu M$, $tGEF^*$ becomes fully recruited by mGAP and the $[tGEF^*]$ s.s is zero for all $[mG](0)$ values. In Fig.\ref{Fig_panel12}D, we  scan the space of initial amounts of mG and mGAP. When $[mGAP](0)> 10 \mu M$, the $[tGEF^*]$ s.s is zero, while for $[mGAP](0)< 10 \mu M$ is becomes nonzero and dependent of $[mG](0)$. In Figs. \ref{Fig_panel12} E, F, and G, we analyze the $[tG^*]$ concentration values and obtain similar results.  

\begin{figure}
	\centering
	\includegraphics[scale=1]{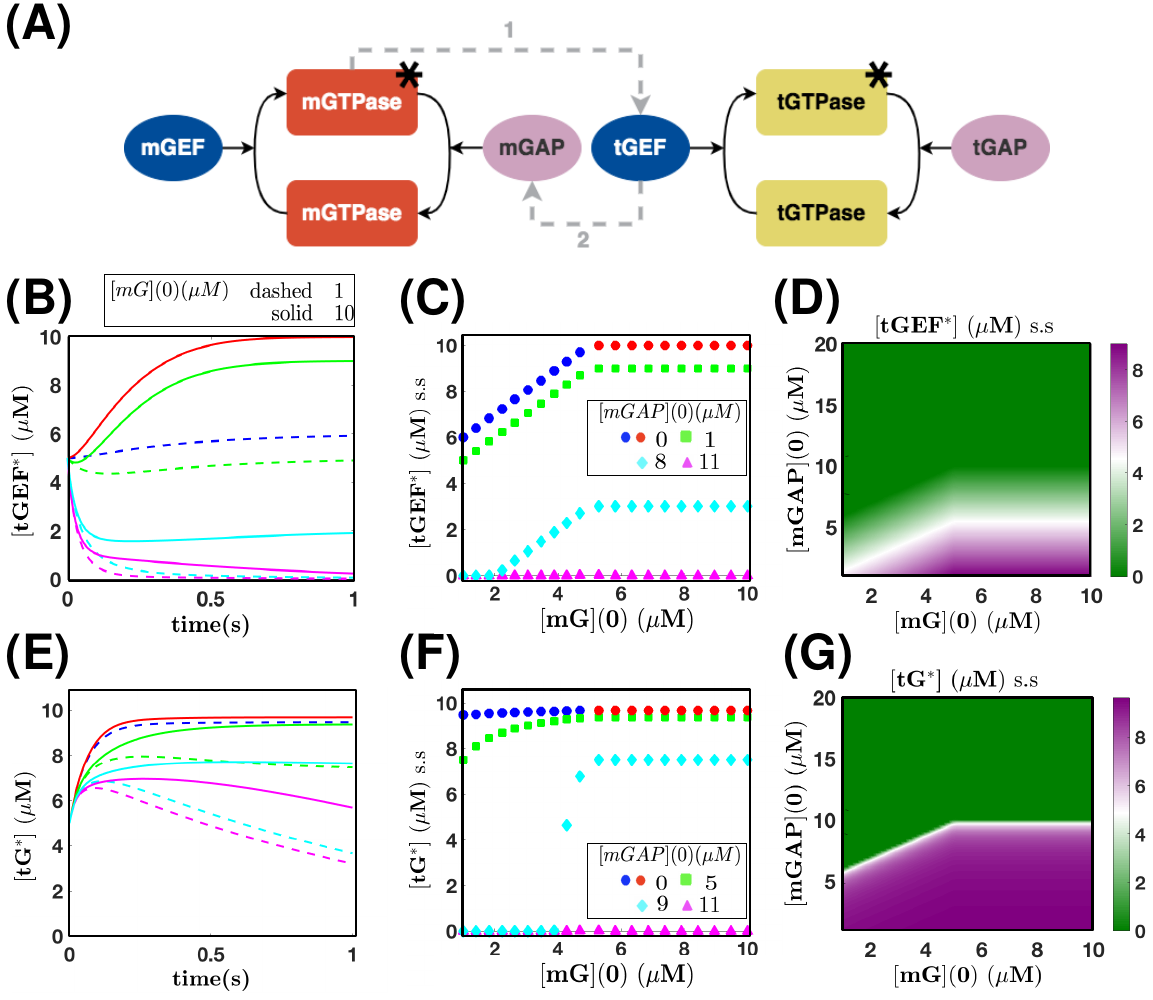}
	\caption{\footnotesize{\textbf{Trajectories of the system and steady states (s.s) (arrows 1 and 2).} (A) Schematics of the coupled GTPases with the feedforward connection $mG^* \to tGEF$ (arrow 1) and the feedback loop $tGEF \to mGAP$ (arrow 2). (B) $[tGEF^*]$ trajectories for $[mGAP](0) = $ 0, 1, 8, and 11 $\mu M$. For each $[mGAP](0)$ value, we plot two curves for $[mG](0) = $ 1 (dashed) and 10 $\mu M$ (solid) (C) Dose response curves show $[tGEF^*]$ s.s when $[mG](0)$  ranges from 0 to 10 $\mu M$. If $[mGAP](0)=0 \mu M$  (blue and red dots),  there will be no mGAP activation and therefore no effects of the feedback. For $[mGAP](0)>0 \mu M$, the feedback becomes effective and generate different $[tGEF^*]$   responses.  (D) Colormap for $[tGEF^*]$  s.s concentrations for a range of $[mG](0)$ and $[mGAP](0)$ values.   A sharp decrease on $[tGEF^*]$  occurs  when $[mGAP](0) \geq 10 \mu M$. When $[mGAP](0) < 10 \mu M$, the $[tGEF^*]$ s.s depend on $[mG](0)$. (E) $[tG^*]$ trajectories for $[mGAP](0) = $ 0, 5, 9, and 11 $\mu M$ and same $[mG](0)$. (F) Dose response curves for $[tG^*]$ s.s depend on $[mGAP](0)$. (G) Colormap for  $[tG^*]$ s.s.; lower tG* concentrations result from higher $[mGAP](0)$ values, since tGEF* is recruited for mGAP activation. Parameter values: $k_{on} =3 (s.\mu M s)^{-1}$, $k_off  = 1 (s.\mu M)^{-1}$, $[mG^*](0) = 0 \mu M$, $[tGEF_{tot}](0) = 10 \mu M$, $[tGEF*](0) = 5 \mu M$, $\mathcal{T}^*(0) = 0.5$, $[tG_{tot}] = 10 \mu M$, $[mGAP^*](0)=1 \mu M$, $[tGAP^*](0) = 1 \mu M$, $[mGEF^*]=1 \mu M$.  Simulation times: $5 s$ (B and E) and $50 s$ (C, D, F, and G). Numerical simulations were performed using the solver ode23s in Matlab R2018a. All parameters were arbitrarily chosen only to illustrate the dynamic features of the model.}}
	\label{Fig_panel12}
\end{figure}

Fig.\ref{Fig_panel123} illustrates the dynamics of the system when the feedback loops $tGEF \to mGAP$ and $tG^* \to mGAP$ are added to the feedforward connection (Fig.\ref{Fig_panel123}A). In Fig.\ref{Fig_panel12}B, we plot several $[tGEF^*]$ trajectories starting at $[tGEF^*] = 5 \mu M$ for different $[mG](0)$ and $[mGAP](0)$ values.  In Fig.\ref{Fig_panel12}C, different dose-response curves are generated to show the steady state tGEF* values. As in the previous case with only one feedback loop, if $[mGAP](0)=0 \mu M$ (blue and red dots), mGAP cannot be activated by tGEF*.  When $[mGAP](0)=1$ (green squares),  a similar steady state profile emerges, with $[tGEF^*]$ s.s increasing for $[mG](0) \leq 5 \mu M$ and remaining constant $[mG](0)> 5$. When $[mGAP](0)=8$ and  11 $\mu M$, $[tGEF^*]$ increases until $[mG](0)<5 \mu M $. For $[mG](0)>5$, the steady state achieves its maximum value. In Fig.\ref{Fig_panel123}D, we scan the space of initial amounts of mG and mGAP and we observe a more graded response in comparison with Fig.\ref{Fig_panel12}. In Figs. \ref{Fig_panel123} E, F, and G, we analyze the $[tG^*]$ concentration values and obtain similar results.  

\begin{figure}
	\centering
	\includegraphics[scale=1]{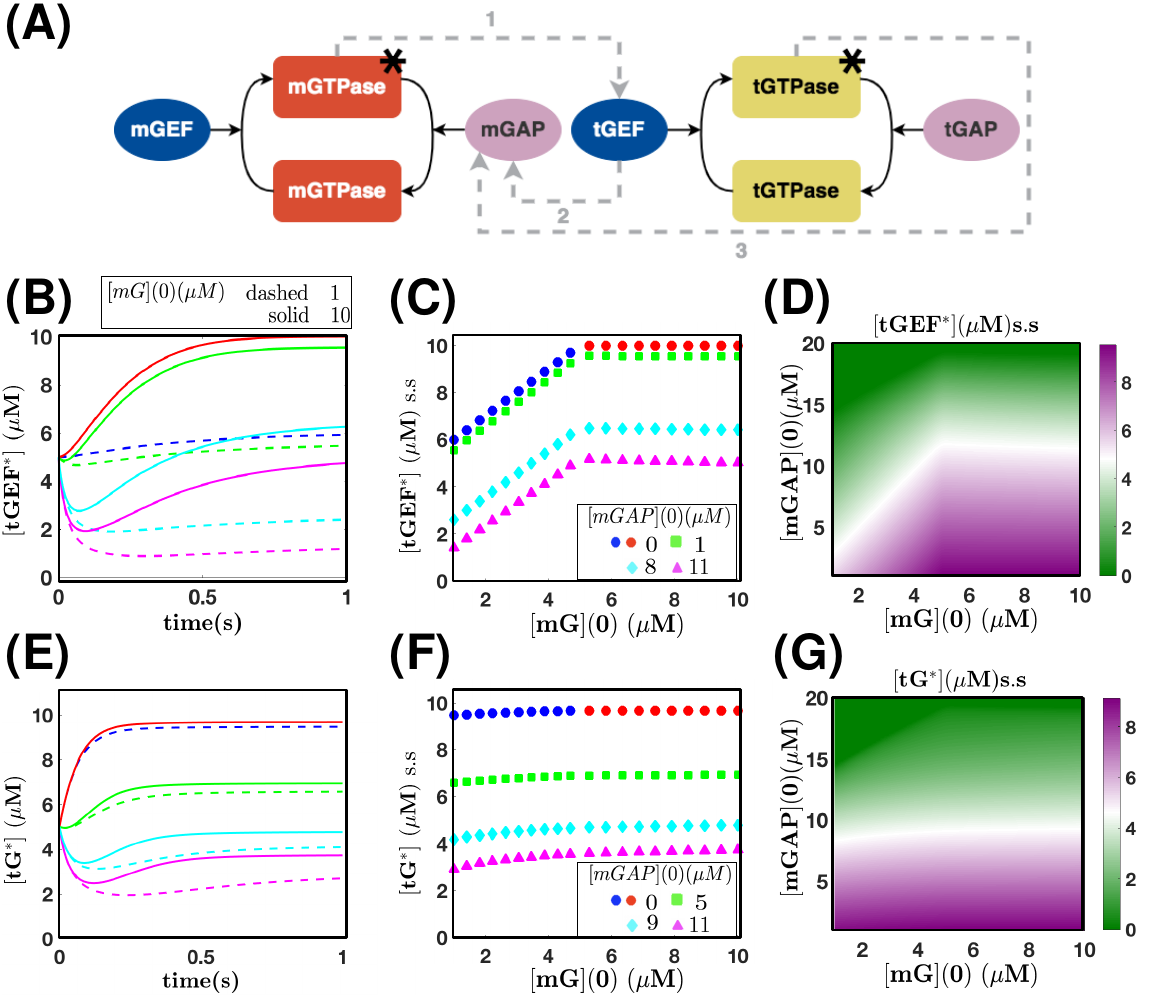}
	\caption{\footnotesize{\textbf{Trajectories of the system and steady states (s.s) (arrows 1, 2, and 3).} (A) Schematics of the coupled GTPases with the feedforward connection $mG^* \to tGEF$ (arrow 1) and the feedback loops $tGEF \to mGAP$ (arrow 2) and $tG^* \to mGAP$ (arrow 3). (B) $[tGEF^*]$ trajectories for $[mGAP](0) = $ 0, 1, 8, and 11 $\mu M$. For each $[mGAP](0)$ value, we plot two curves for $[mG](0) = $ 1 and 10 $\mu M$. (C) Dose response curves show $[tGEF^*]$ s.s when $[mG](0)$  ranges from 0 to 10 $\mu M$. If $[mGAP](0)=0 \mu M$  (blue and red dots),  there will be no mGAP activation and therefore no effects of the feedback loops. For $[mGAP](0)>0 \mu M$, the feedback becomes effective and generate different $[tGEF^*]$   responses.  (D) Colormap for $[tGEF^*]$  s.s concentrations for a range of $[mG](0)$ and $[mGAP](0)$ values. A more graded decrease on $[tGEF^*]$  occurs  when $[mGAP](0) \geq 10 \mu M$ in comparison with Fig.\ref{Fig_panel12}D. (E) $[tG^*]$ trajectories for $[mGAP](0) = $ 0, 5, 9, and 11 $\mu M$ and same $[mG](0)$. (F) Dose response curves for $[tG^*]$ s.s  depend on $[mGAP](0)$ and does not change significantly as $[mG](0)$ increases. (G) Colormap for  $[tG^*]$ s.s.; lower tG* concentrations result from higher $[mGAP](0)$ values, since tGEF* and tG* are recruited for mGAP activation. Parameter values: $k_{on} =3 (s.\mu M s)^{-1}$, $k_off  = 1 (s.\mu M)^{-1}$, $[mG^*](0) = 0 \mu M$, $[tGEF_{tot}](0) = 10 \mu M$, $[tGEF*](0) = 5 \mu M$, $\mathcal{T}^*(0) = 0.5$, $[tG_{tot}] = 10 \mu M$, $[mGAP^*](0)=1 \mu M$, $[tGAP^*](0) = 1 \mu M$, $[mGEF^*]=1 \mu M$. Simulation times: $5 s$ (B and E) and $50 s$ (C, D, F, and G) . Numerical simulations were performed using the solver ode23s in Matlab R2018a. All parameters were arbitrarily chosen only to illustrate the dynamic features of the model.}}
	\label{Fig_panel123}
\end{figure}

\color{black}
In Fig.\ref{Fig_BA12_123}, we investigate the space of initial conditions for mG* and mGAP* in which the system converges to the different steady states. Fig.\ref{Fig_BA12_123}A shows the simplest system where the two GTPase switches are connected by the feedforward $mG^* \to tGEF$. Two steady states are obtained depending on the initial amount of mG*. For $[mG^*](0)< [tGEF](0)- [mGAP^*](0) = 5 \mu M$, the trajectories converge to steady state 1 with no mG and mG* concentrations. On the other hand, for $[mG^*](0)> [tGEF](0)- [mGAP^*](0) = 5 \mu M$, then the system achieves the steady state 2 with non zero concentrations of both m and t-GTPase. Fig.\ref{Fig_BA12_123}B shows the results for the feedforward connection, and feedback loops $tGEF \to mGAP$ (arrows 1+2). In this particular example, the four steady states can be achieved for $[mGAP^*](0)$ and $[mG^*](0)$  ranging from 0 to 12 $\mu M$ and 0 and 10 $\mu M$, respectively. In the vertical direction, the initial amount of mG* governs the transitions from steady states 3 to 4 (lower $[mGAP^*](0)$) and 1 to 2 (higher $[mGAP^*](0)$). In both steady states 2 and 4 (Eqs. \ref{ss2_12} and \ref{ss4_12_new}), the concentrations of mGTPase are nonzero. Therefore, we predict that an increase of initial concentration of mG* would favor the emergence of these two steady states. In the horizontal direction, when the initial amount of mGAP* increases,  the available mGAP (inactive) decreases as we set the total mGAP as 12 $\mu M$, which reduces the effects of the feedback loops and thus facilitates the emergence of the steady states 1 and 2 where the concentrations of tGTPase are nonzero. 

Fig.\ref{Fig_BA12_123}C shows a similar colormap for the system with both feedback loops $tGEF \to mGAP$ and $tG^* \to mGAP$. It is worth noticing the expansion of the basin of attraction of Families 1 and 2 compared to Fig.\ref{Fig_BA12_123}A, while the basin of Families 3 and 4 shrinks. Remarkably, in both Figs. \ref{Fig_BA12_123}B and \ref{Fig_BA12_123}C, there is a critical point  (represented by a black cross) of intersection of the four basins of attraction. In this case, disturbances in the initial conditions around that intersection point can drive the system to different steady states. Thus, while coupling the two GTPase switches with a forward arrow only gives two possible steady states, the negative feedback afforded by arrows 2 and 3 give rise to a larger range of possibilities. Additionally, the existence of a critical point emerges in the presence of the negative feedback suggesting a rich phase space for this coupled system.

\begin{figure}
	\centering
	\includegraphics[scale=0.55]{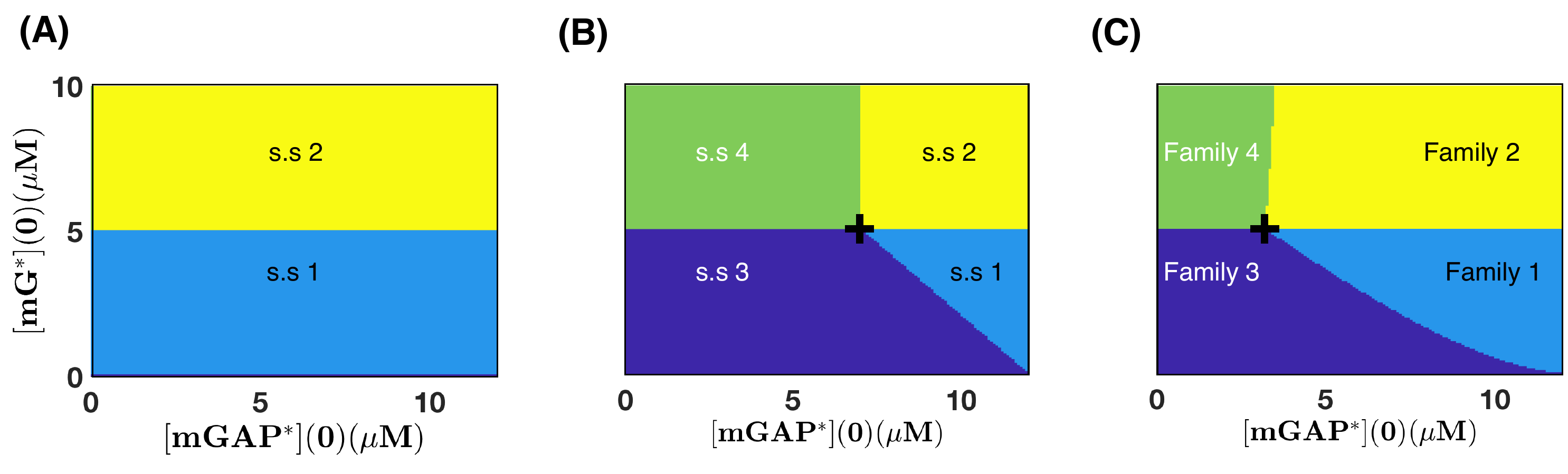}
	\caption{\small{\textbf{Basins of Attraction  -- dependency on $[mG^*(0)]$ and $[mGAP^*(0)]$} (A) The two steady states of the system with feedforward connection (\Cref{subsec_Arrow1})  are only driven by changes in the initial amount of mG* (B) When the feedforward and both feedback loops $tGEF \to mGAP$  are considered, we observe the emergence of four regions (green,yellow, dark blue and light blue colored) corresponding to the four  steady states from \Cref{subsec_Arrows12} (C) A similar result was found when we analyzed the system with the feedforward and both feedback loops $tGEF \to mGAP$ and $tG^* \to mGAP$. A black cross  indicates a critical point at the intersection of the four basins of attraction. Parameter values: $k_{on}  = 3 (s.\mu M)^{-1}$, $k_{off} = 1 (s.\mu M)^{-1} $, $[mG](0) = 0 \mu M$, $[mG^*](0) = 0 \mu M$, $[tG](0) = 5 \mu M $, $[tG^*](0) = 0 \mu M$, $[tGEF](0)=5 \mu M$, $[mGAP](0)=12 \mu M - [mGAP^*](0) $, $[tGAP^*](0) = 1 \mu M$, $[mGEF^*](0)  = 1 \mu M$}}
	\label{Fig_BA12_123}
\end{figure}

\section{Discussion}
\label{discussion}

\begin{figure}
	\centering
	\includegraphics[scale=0.25]{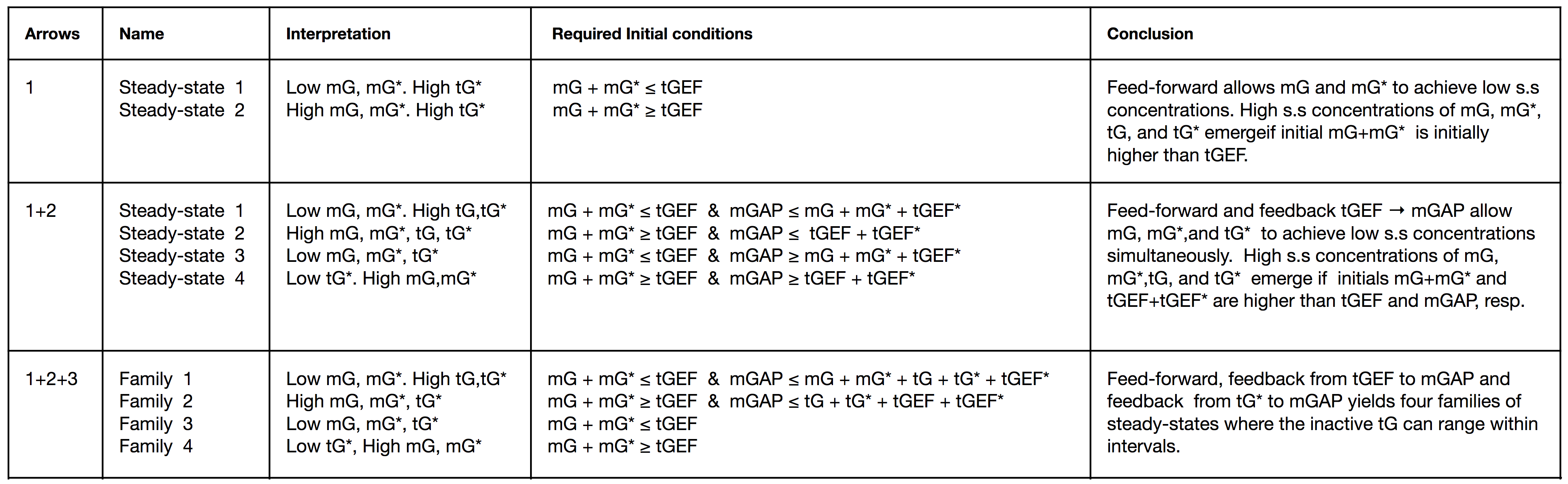}
	\caption{\textbf{Main results and conclusions from steady state analysis} We performed a steady state analysis of a GTPase coupled circuit that has been observed experimentally. For three biologically relevant combinations among the feedforward and two feedback loops, we present the steady states and their interpretation. Moreover, we established the required initial conditions for the existence of the steady states. Each connection  adds to the richness of the functioning of these coupled GTPase switches.}
	\label{Fig3}
\end{figure}

GTP-binding proteins (GTPases) regulate crucial aspects of numerous cellular events. Their ability to act as biochemical switches is essential to promote information processing within signaling networks. The two types of GTPases  - monomeric (m) and heterotrimeric (t) - have traditionally believed to function independently. More recently, a series of experimental findings revealed that m- and tGTPases co-regulate each other in the Golgi through a functionally coupled circuit \cite{lo2015activation}. In this work, we sought to understand the dynamic properties of this dual GTPase circuit. To this end, we developed a system of differential equations that describes the evolution of two coupled GTPase switches. Our analysis provided insight into the emergence of steady state configurations that cannot be achieved in systems of isolated GTPase switches. To the best of our knowledge, this is the first modeling effort that described coupled GTPase switches with a feedforward connection $mG^* \to tGEF$ and the two feedback loops $tGEF \to mGAP$ and $tG^* \to mGAP$ \cite{lo2015activation,jamora1997regulation,stow1991heterotrimeric,stow1998vesicle,stow1995regulation,cancino2013signaling}.   
 
A major result from our analysis is a systematic characterization of the steady state concentrations of both m- and tGTPases, as well as their GEFs and GAPs.  Given the model formulation and the fact that we do not know the various kinetic parameters, obtaining these states is critical. We show the obtained steady states in all three arrow combinations that we chose to analyze (Fig.\ref{Fig3}). Remarkably, the different steady states show a variety of configurations in which both m- and tGTPase can be interpreted as having low or high concentration values. These results help us to understand how the feedforward connection and feedback loops impact the coupled system. Once we know the steady states, locally asymptotic stability ensures that the trajectories will converge back to the steady state, given sufficiently close initial conditions. In  \S \ref{subsec_Arrow1} -- \S \ref{subsec_Arrows12}, we confirmed that all steady states obtained with a feedforward connection and feedback loop $tGEF \to mGAP$ (arrow 1 or arrows 1+2 in Fig.\ref{Fig1}B) are locally asymptotically stable. However, when the two feedback loops are considered along with the feedforward (arrows 1+2+3 in Fig.\ref{Fig1}B), the local stability analysis cannot be performed because the steady states are not isolated. Instead, we obtain four one-parameter families that depend on the amount of inactive tGTPase. At this point, further investigation would be needed to determine the behavior of the system near those steady state families. Even as we aim to develop complex models that are refined with  iterative experimental validations, we note that our analysis gives insight to different steady states that emerge due to different couplings that may not exist in physiology. Such insights may become meaningful in the context of disease pathogenesis where copy numbers of each player in the network motif may change relative to each other, and do so dynamically (e.g., when responding to stress/stimuli), or disease-driving mutations alter their functions (e.g., activating and inactivating mutations in GTPases, GAPs, or GEFs).    

Even though the mathematical structure of the model may appear simple, we find a rich phase space for the coupled GTPase switches by analyzing the combination of network connections that have more biological meaning. Future studies could also explore the role of an external stimulus in the emergence of ultrasensitive behavior \cite{lipshtat2010design} and spatial organization of these switches. 
Moreover, our system is well suited for coupling with experimental measurements \cite{getz2019predictive}, including dose-response curves, response times, and noise fluctuations, as done recently in \cite{ghusinga2020molecular}.

\newpage

\section{Acknowledgments} 

This work was supported by Air Force Office of Scientific Research (AFOSR)
Multidisciplinary University Research Initiative (MURI) grant
FA9550-18-1-0051 (to P. Rangamani) and the National Institute of Health (CA100768, CA238042 and AI141630 to P. Ghosh). Lucas M. Stolerman acknowledges support from the National Institute of Health (CA209891).

\footnotesize{
\bibliography{IB_ref_motif} 
\bibliographystyle{unsrtPR_motif}
}
\clearpage


\newpage
\appendix

\normalsize
\section{Proof of Proposition \ref{prop_arrow1}}
\label{appendix_1}

We must find nonnegative $\widehat{[mG]}$, $\widehat{[mG^*]}$, $\widehat{\mathcal{G}^*}$, and $\widehat{\mathcal{T}^*}$ satisfying the following system:  
\small
\begin{eqnarray}
 &&  k_{on} [mGEF^*]\widehat{[mG]}  - k_{off}[mGAP^*]\widehat{[mG^*]} -k_{on}[tGEF_{tot}](1 - \widehat{\mathcal{G}^*})\widehat{[mG^*]} = 0 \label{steadystate1:1}  \\
&& k_{on} [tGEF_{tot}]  \widehat{\mathcal{G}^*} (1 - \widehat{\mathcal{T}^*}) - k_{off}[tGAP^*]\widehat{\mathcal{T}^*}   = 0   \label{steadystate1:2} \\
 && k_{on}(1 -\widehat{\mathcal{G}^*})\widehat{[mG^*]} = 0 \label{steadystate1:3} \\
 && 
 \widehat{[mG]} +  \widehat{[mG^*]} + [tGEF_{tot}] \widehat{\mathcal{G^*}} = C.  \label{steadystate1:4}
\end{eqnarray}
\normalsize

 From Eq. \ref{steadystate1:3}, and since we assume $k_{on}>0$, we must have   $\widehat{[mG^*]}=0$ or $\widehat{\mathcal{G}^*} = 1 $. Thus we divide the steady state analysis in two cases.

\noindent
\underline{Case 1:  $\widehat{[mG^*]}=0.$}
\noindent
\\
From Eq. \ref{steadystate1:1} we must have  $\widehat{[mG]}=0$ and from Eq. \ref{steadystate1:4}, we obtain
$ \widehat{\mathcal{G^*}} = \frac{C}{[tGEF_{tot}]} $. Since $\widehat{\mathcal{G^*}} \leq 1 $ by definition, we conclude that 
\begin{equation}
C \leq [tGEF_{tot}].
\label{cond1}
\end{equation}
Eq. \ref{cond1} is also sufficient for  $\widehat{[mG^*]}=0$. Otherwise, if $C \leq [tGEF_{tot}]$ and $\widehat{[mG^*]}>0$, then $\widehat{\mathcal{G^*}} = 1$ (Eq. \ref{steadystate1:3}) and from Eq. \ref{steadystate1:4}, we would conclude that $\widehat{[mG]}+\widehat{[mG^*]} \leq 0$, which is imposible. 

Finally, by substituting $ \widehat{\mathcal{G^*}}$ in  Eq. \ref{steadystate1:2}, 
we obtain $ \widehat{\mathcal{T}^*} = \frac{1}{1+\frac{k_{off} [tGAP^*]}{k_{on} C}}$
and therefore the steady state is given by
\begin{equation*}
\left(\widehat{[mG]},\widehat{[mG^*]},\widehat{\mathcal{T}^*},\widehat{\mathcal{G}^*}  \right) = \left(0, 0, \frac{1}{1+\frac{k_{off} [tGAP^*]}{k_{on} C}}, \frac{C}{[tGEF_{tot}]}  \right)
\end{equation*}

\noindent
\underline{Case 2:  $\widehat{\mathcal{G^*}}=1$}
\noindent
\\
In this case, $ \widehat{[mG^*]} \geq 0$ and from  Eqs. \ref{steadystate1:1} and \ref{steadystate1:4},
we obtain 
\begin{equation*}
\widehat{[mG^*]} = \frac{k_{on} [mGEF^*]}{k_{on}[mGEF^*]+k_{off}[mGAP^*]} \left( C- [tGEF_{tot}]\right) 
\end{equation*}
and
\begin{equation*}
\widehat{[mG]} = \frac{k_{off} [mGAP^*]}{k_{on}[mGEF^*]+k_{off}[mGAP^*]} \left( C- [tGEF_{tot}]\right).
\end{equation*}
In this case, since the steady state has to be nonnegative, we must have 
\begin{equation}
C \geq [tGEF_{tot}].
\label{cond2}
\end{equation}
which is also sufficient for $\widehat{\mathcal{G^*}}=1$. Otherwise if $C \geq [tGEF_{tot}]$ and $\widehat{\mathcal{G^*}}<1$, then $\widehat{[mG^*]}=\widehat{[mG]}=0$ (Eq. \ref{steadystate1:1}) and, from Eq. \ref{steadystate1:3}, we would have 
$$  C =  \widehat{[mG]} +  \widehat{[mG^*]} + [tGEF_{tot}] \widehat{\mathcal{G^*}} <  [tGEF_{tot}], $$ which is impossible.

Finally, by substituting $\widehat{\mathcal{G^*}}=1$ in Eq. \ref{steadystate1:2},  we obtain
$$  k_{on} [tGEF_{tot}] (1 - \widehat{\mathcal{T}^*}) - k_{off}[tGAP^*]\widehat{\mathcal{T}^*}= 0$$
which gives  us  $\widehat{\mathcal{T}^*} = \frac{k_{on} [tGEF_{tot}]}{k_{on}[tGEF_{tot}] + k_{off}[tGAP^*]}$
and therefore

\small
\begin{eqnarray}
\left(\widehat{[mG]},\widehat{[mG^*]},\widehat{\mathcal{T}^*},\widehat{\mathcal{G}^*}  \right) &=& 
\left(\frac{k_{off} [mGAP^*]}{k_{on}[mGEF^*]+k_{off}[mGAP^*]} \left( C- [tGEF_{tot}]\right), \right. \nonumber\\
&& \left. \frac{k_{on} [mGEF^*]}{k_{on}[mGEF^*]+k_{off}[mGAP^*]} \left( C- [tGEF_{tot}]\right), \right. \nonumber \\
&& \left. \frac{k_{on} [tGEF_{tot}]}{k_{on}[tGEF_{tot}] + k_{off}[tGAP^*]},1  \right).
\end{eqnarray}
\normalsize

\newpage
\section{Proof of Theorem \ref{prop_12}}
\label{appendix_12}
\normalsize

We start our proof by computing the steady states of the system, which are solutions of the algebraic system given by Eqs. \ref{ss_red12:1}--\ref{ss_red12:7}. 
We also establish necessary and sufficient conditions involving the parameters $C_1$, $C_2$, and $[mGAP_{tot}]$ for the existence of each steady state.
We then compute the Jacobian matrix of the system and determine the local stability of the steady state based on the classical linearization procedure \cite{strogatz1994}.

\subsection*{Steady states.}  
 We divide our analysis into four different cases that emerge from the preliminary inspection of the system given by  Eqs. \ref{ss_red12:1}--\ref{ss_red12:7}.

\normalsize
\underline{Case 1:  $\widehat{[mG^*]}=0$ and $\widehat{[mGAP^*]}=[mGAP_{tot}]$.}
\noindent
\\
From Eq. \ref{ss_red12:1}, we have $\widehat{[mG]} = 0$ and from Eq. \ref{ss_red12:6}, $\widehat{[tGEF^*]} = C_1 - [mGAP_{tot}]$.
Thus $C_1 \geq [mGAP_{tot}]$ since the steady state must be nonnegative. Now Eq. \ref{ss_red12:7} gives $\widehat{[tGEF]} = C_2 - C_1$ and that implies $C_2 \geq C_1$.  

Finally, Eq. \ref{ss_red12:3} yields
$$  (C_1 - [mGAP_{tot}])(1- \widehat{\mathcal{T}^*})  -  k [tGAP^*] \widehat{\mathcal{T}^*} = 0 $$
and hence 
$$ \widehat{\mathcal{T}^*} = \frac{C_1 - [mGAP_{tot}]}{(C_1 - [ mGAP_{tot}]) + k [tGAP^*]} $$
The steady state is therefore given by 
\footnotesize
\begin{eqnarray}
\widehat{\bf{x}}
&=& \bigg(0,0,\frac{C_1 - [mGAP_{tot}]}{(C_1 - [mGAP_{tot}]) + k [tGAP^*]}, C_2 - C_1,C_1 - [mGAP_{tot}],[mGAP_{tot}] \bigg).
\nonumber
\end{eqnarray}

\normalsize
We now observe that the two parameter relations
\begin{equation}
C_1 \geq [mGAP_{tot}] \quad \text{and} \quad C_2 \geq C_1
\label{cond4_12}
\end{equation}
 are sufficient for $\widehat{[mG^*]}=0$ and $\widehat{[mGAP^*]}=[mGAP_{tot}]$. 
 In fact, if $C_2 \geq C_1$ then $\widehat{[mG^*]}=0$ from the same argument as in Case 3.
 Now, Eq. \ref{ss_red12:6} gives $\widehat{[tGEF^*]} = C_1 - \widehat{[mGAP^*]}$ and from Eq. \ref{ss_red12:5}, we must have  $\widehat{[mGAP^*]}= [mGAP_{tot}]$ or $\widehat{[tGEF^*]}=0$. 
 If $\widehat{[tGEF^*]}=0$ then $\widehat{[mGAP^*]} = C_1 \geq [mGAP_{tot}]$ and hence $\widehat{[mGAP^*]} = [mGAP_{tot}]$.  
Therefore, we have shown that Eq. \ref{cond4_12} imply $\widehat{[mG^*]}=0$ and $\widehat{[mGAP^*]}=[mGAP_{tot}]$. Consequently, the steady state in this case must be given by Eq. \ref{ss1_12_new}.
\\
\\
\noindent
\underline{Case 2:  $ \widehat{[tGEF]}=0$ and  $\widehat{[mGAP^*]}=[mGAP_{tot}]$ }
\noindent
\\
From Eq. \ref{ss_red12:7}, $\widehat{[tGEF^*]} = C_2 - [mGAP_{tot}] $ and hence $[mGAP_{tot}] \leq C_2$. 
From Eq. \ref{ss_red12:6}, we must have $ \widehat{[mG]} + \widehat{[mG^*]}  = C_1 - C_2$ and that implies $C_1 \geq C_2$. Now, Eq. \ref{ss_red12:1} gives 
$$ [mGEF^*] \left( C_1 - C_2 - \widehat{[mG^*]}  \right) = k [mGAP_{tot}] \widehat{[mG^*]}  $$
and therefore

$$ \widehat{[mG^*]} = \frac{[mGEF^*] \left( C_1 - C_2 \right)}{[mGEF^*] + k [mGAP_{tot}] } \quad \text{and} \quad  \widehat{[mG]} = \frac{k [mGAP_{tot}]\left( C_1 - C_2 \right)}{[mGEF^*] + k [mGAP_{tot}]}.$$

From Eq. \ref{ss_red12:3}, we must have 
$$  \left (C_2 - [mGAP_{tot}] \right)(1- \widehat{\mathcal{T}^*})  -  k [tGAP^*] \widehat{\mathcal{T}^*}  = 0 $$
from which we get
$$ \widehat{\mathcal{T}^*} = \frac{C_2 - [mGAP_{tot}]}{ \left(C_2 - [mGAP_{tot}]\right) + k [tGAP^*]}$$
and therefore the steady state is given by

\footnotesize
\begin{eqnarray}
\widehat{\bf{x}}
&=& \left(\frac{k [mGAP_{tot}]\left( C_1 - C_2 \right)}{[mGEF^*] + k [mGAP_{tot}]}, \frac{[mGEF^*] \left( C_1 - C_2 \right)}{[mGEF^*] + k [mGAP_{tot}] }, \right. \\
&& \left. \frac{C_2 - [mGAP_{tot}]}{ \left(C_2 - [mGAP_{tot}]\right) + k [tGAP^*]} ,0,C_2 - [mGAP_{tot}], [mGAP_{tot}] \right).
\nonumber
\end{eqnarray}
\normalsize

\normalsize
We now observe that the two parameter relations
\begin{equation}
C_2 \geq [mGAP_{tot}] \quad \text{and} \quad C_1 \geq C_2
\label{cond2_12}
\end{equation}
 are sufficient for $ \widehat{[tGEF]}=0$ and $\widehat{[mGAP^*]}=[mGAP_{tot}]$. 
 
 In fact, if $C_1 \geq C_2$ then $\widehat{[tGEF]}=0$ from the same argument as in Case 1. 
 Now, Eq. \ref{ss_red12:7} gives $\widehat{[tGEF^*]} = C_2 - \widehat{[mGAP^*]}$ and from Eq. \ref{ss_red12:5}, we must have $ \widehat{[mGAP^*]} = [mGAP_{tot}]$ or $\widehat{[tGEF^*]}=0$. 
 If $\widehat{[tGEF^*]}=0$ then $  \widehat{[mGAP^*]} = C_2  \geq [mGAP_{tot}]$ (from Eq. \ref{cond2_12}) and thus $\widehat{[mGAP^*]} = [mGAP_{tot}]$.
 Therefore, we have shown that Eq. \ref{cond2_12} imply $ \widehat{[tGEF]}=0$ and $\widehat{[mGAP^*]}=[mGAP_{tot}]$. 
 Consequently, the steady state in this case must be given by Eq. \ref{ss2_12}.

\noindent
\underline{Case 3:  $\widehat{[mG^*]}=0$ and $\widehat{[tGEF^*]}=0$.}
\noindent
\\
From Eq. \ref{ss_red12:1}, we have $\widehat{[mG]} = 0$ and from Eq. \ref{ss_red12:3}, we also get $\widehat{\mathcal{T}^*} =0$ since $k$ and $[tGAP^*]$ are strictly positive numbers.
Now, Eq. \ref{ss_red12:6} gives $\widehat{[mGAP^*]} = C_1$ and thus we must have  $C_1 \leq [mGAP_{tot}]$.
Moreover, Eq. \ref{ss_red12:7} results in $\widehat{[tGEF]} = C_2 - C_1$ and since all steady states must be nonnegative, we obtain $C_2 \geq C_1$. In this case, the steady state is given by 
\small
\begin{eqnarray}
\widehat{\bf{x}}
&=& \left(0,0,0,C_2-C_1,0,C_1\right)
\end{eqnarray}

\normalsize
We now observe that the two parameter relations
\begin{equation}
C_1 \leq [mGAP_{tot}] \quad \text{and} \quad C_2 \geq C_1
\label{cond3_12}
\end{equation}
 are sufficient for $\widehat{[mG^*]}=0$ and $\widehat{[tGEF^*]}=0$. 
 In fact, by subtracting \ref{ss_red12:6} from Eq. \ref{ss_red12:7}, we obtain $$\widehat{[tGEF]} - \widehat{[mG]} + \widehat{[mG^*]} = C_2 - C_1 \geq 0 $$
 and hence $ \widehat{[tGEF]} \geq \widehat{[mG]} + \widehat{[mG^*]} $. On the other hand, from Eq. \ref{ss_red12:4}, we must have $ \widehat{[tGEF]}=0$ or $\widehat{[mG^*]}=0$. 
 Thus if $ \widehat{[tGEF]}=0$ then  $\widehat{[mG]} + \widehat{[mG^*]} \leq 0$ and hence the nonnegativeness of the steady state implies $\widehat{[mG]}=\widehat{[mG^*]}=0$. 
 Hence we conclude that $C_2 \geq C_1$ implies $\widehat{[mG^*]}=0$. 
 
 Now, Eq. \ref{ss_red12:6} gives $\widehat{[tGEF^*]} = C_1 - \widehat{[mGAP^*]}$ and from Eq. \ref{ss_red12:5}, we must have  $\widehat{[tGEF^*]}=0$ or $\widehat{[mGAP^*]}= [mGAP_{tot}]$. 
 If $\widehat{[mGAP^*]}=[mGAP_{tot}]$, then $\widehat{[tGEF^*]} = C_1 - [mGAP_{tot}]  \leq 0$ (from Eq. \ref{cond3_12}) and thus $\widehat{[tGEF^*]} = 0$. 
 Therefore, we have shown that Eq. \ref{cond3_12} imply $\widehat{[mG^*]}=0$ and $\widehat{[tGEF^*]} = 0$.
 Consequently, the steady state in this case must be given by Eq. \ref{ss3_12}.

\noindent
\underline{Case 4:  $ \widehat{[tGEF]}=0$ and $\widehat{[tGEF^*]} = 0$ }
\noindent
\\
From Eq. \ref{ss_red12:7}, we obtain $\widehat{[mGAP^*]}=C_2$ and hence $C_2 \leq [mGAP_{tot}]$. 
From Eq. \ref{ss_red12:6}, we have $\widehat{[mG]} + \widehat{[mG^*]} = C_1 - C_2$ and that implies $C_1 \geq C_2$ since the concentrations at steady state must be nonnegative. 
Eq. \ref{ss_red12:1} then gives
$$   - [mGEF^*]\left( C_1 -C_2 - \widehat{[mG^*]} \right) + k C_2\widehat{[mG^*]} = 0$$
from which we obtain 
$$ \widehat{[mG^*]} = \frac{[mGEF^*] \left( C_1 - C_2 \right)}{[mGEF^*] + k C_2} \quad \text{and} \quad  \widehat{[mG]} = \frac{k C_2 \left( C_1 - C_2 \right)}{[mGEF^*] + k C_2}.$$

From Eq. \ref{ss_red12:3}, we have $\widehat{\mathcal{T}^*}  = 0$ and therefore the steady state is given by

\footnotesize
\begin{eqnarray*}
\widehat{\bf{x}}
&=& \left(\frac{k C_2}{[mGEF^*]+k C_2} \left( C_1- C_2\right), \frac{[mGEF^*]}{[mGEF^*]+k C_2} \left( C_1- C_2\right), 0,0,0, C_2 \right). \nonumber
\end{eqnarray*}

\normalsize
We now observe that the two parameter relations
\begin{equation}
C_2 \leq [mGAP_{tot}] \quad \text{and} \quad C_1 \geq C_2
\label{cond1_12}
\end{equation}
 are sufficient for $ \widehat{[tGEF]}=0$ and $\widehat{[tGEF^*]} = 0$. 
 In fact, if $C_1 \geq C_2$ then by subtracting Eq. \ref{ss_red12:7} from Eq. \ref{ss_red12:6}, we have $$\widehat{[mG]} + \widehat{[mG^*]} - \widehat{[tGEF]}  = C_1 - C_2 \geq 0 $$
 and hence $ \widehat{[mG]} + \widehat{[mG^*]} \geq  \widehat{[tGEF]} $. On the other hand, from Eq. \ref{ss_red12:4}, we must have $ \widehat{[tGEF]}=0$ or $\widehat{[mG^*]}=0$. 
 Thus if $\widehat{[mG^*]} = 0$ then  $\widehat{[mG]} = 0$ (from Eq. \ref{ss_red12:1}) and hence the nonnegativeness implies $\widehat{[tGEF]}=0$. 
 Hence we conclude that Eq. \ref{cond1_12} guarantee $\widehat{[tGEF]}=0$.

Now, Eq. \ref{ss_red12:7} gives $\widehat{[tGEF^*]} = C_2 - \widehat{[mGAP^*]}$ and from Eq. \ref{ss_red12:5}, we must have $\left( [mGAP_{tot}] - \widehat{[mGAP^*]}  \right)=0$ or $\widehat{[tGEF^*]}=0$. 
If $\widehat{[mGAP^*]}=[mGAP_{tot}]$ then $\widehat{[tGEF^*]} = C_2 - [mGAP_{tot}]  \leq 0$ (from Eq. \ref{cond1_12}) and thus $\widehat{[tGEF^*]} = 0$. 
Therefore, we have shown that Eq. \ref{cond1_12} implies $\widehat{[tGEF]}=0$ and $\widehat{[tGEF^*]} = 0$.
Consequently, the steady state in this case must be given by Eq. \ref{ss4_12_new}.

\subsection*{Local Stability Analysis.}

We begin reducing the ODE system with the conservation laws given by Eqs. \ref{constrain1_12} and \ref{constrain2_12}. 
In fact, if we write 
$$[mG]  = C_1 -  [mG^*] - [tGEF^*] - [mGAP^*] \quad \text{and} \quad [tGEF] = C_2 - [tGEF^*] - [mGAP^*] $$

then Eqs. \ref{model_red12:1} -- \ref{model_red12:6} can be written in the form 

\small
\begin{eqnarray}
	\frac{d[mG^*]}{dt}       &=& f_1([mG^*],\mathcal{T}^*, [tGEF^*],[mGAP^*])    \\
	\frac{d\mathcal{T}^*}{dt}&=& f_2([mG^*],\mathcal{T}^*, [tGEF^*],[mGAP^*])  \\
	\frac{d[tGEF^*]}{dt}     &=& f_3([mG^*],\mathcal{T}^*, [tGEF^*],[mGAP^*]) \\
	\frac{d[tGEF^*]}{dt}     &=& f_4([mG^*],\mathcal{T}^*, [tGEF^*],[mGAP^*]) 
\end{eqnarray}
\normalsize

where 
\small
\begin{equation*}
\begin{aligned}
     f_1([mG^*],\mathcal{T}^*, [tGEF^*],[mGAP^*]) & = k_{on} [mGEF^*]\left( C_1 -  [mG^*] - [tGEF^*] - [mGAP^*] \right) \\ 
     &- k_{off}[mGAP^*][mG^*] -k_{on}\left( C_2 - [tGEF^*] - [mGAP^*]\right)[mG^*],
\end{aligned}
\end{equation*}

$$ f_2([mG^*],\mathcal{T}^*, [tGEF^*],[mGAP^*])  =  k_{on} [tGEF^*] (1 - \mathcal{T}^*) - k_{off}[tGAP^*]\mathcal{T}^*,$$

\begin{equation*}
\begin{aligned}
f_3([mG^*],\mathcal{T}^*, [tGEF^*],[mGAP^*]) & = k_{on}\left(C_2 - [tGEF^*] - [mGAP^*]\right)[mG^*]  \\
& - k_{on} [tGEF^*] \left( [mGAP_{tot}] - [mGAP^*] \right),
\end{aligned}
\end{equation*} 

and
$$f_4([mG^*],\mathcal{T}^*, [tGEF^*],[mGAP^*]) = k_{on}  \left( [mGAP_{tot}] - [mGAP^*] \right) [tGEF^*].$$
\normalsize

The eigenvalues of the Jacobian Matrix can be thus calculated for each one of the four steady states given by Eqs. \ref{ss1_12_new} -- \ref{ss4_12_new}.
We prove that all steady states are LAS by showing that the eigenvalues of the Jacobian Matrix are all negative real numbers. 
We perform the calculations with MATLAB's R2019b symbolic toolbox and analyze each case separately (see supplementary file with MATLAB codes). 
In what follows, let $k_{mGEF} = k_{on} [mGEF^*]$ and $k_{tGAP} = k_{off} [tGAP^*]$.

\begin{enumerate}
   
        \item If $C_1 > [mGAP_{tot}]$ and $C_2>C_1$, the  Jacobian matrix evaluated at the steady state given by Eq. \ref{ss1_12_new} gives the  eigenvalues
    
    $$ \lambda_1 =  -k_{on}(C_1 - [mGAP_{tot}])  \quad \text{and} \quad    \lambda_2 =  - k_{on} ( C_1 - [mGAP_{tot}] ) - k_{tGAP}  $$
    
    which are negative. Moreover, the other eigenvalues $\lambda_3$ and $\lambda_4$ are such that 
    $$  \lambda_3 + \lambda_4 =  k_{on} (C_1 - C_2) - k_{mGEF}  - k_{off} [mGAP_{tot}] < 0 $$  and 
  $$  \lambda_3 \lambda_4  = -k_{mGEF} k_{on}(C_1 - C_2) >0 $$
    
   and thus $\lambda_3$ and $\lambda_4$ are negative and hence the steady state is LAS.
    \item If $C_2 > [mGAP_{tot}]$ and $C_1>C_2$, the  Jacobian matrix evaluated at the steady state given by Eq. \ref{ss2_12} gives the  eigenvalues $$\lambda_1 =   - k_{mGEF} - k_{off}[mGAP_{tot}], \quad \lambda_2 = - k_{tGAP} - k_{on} (C_2 - [mGAP_{tot}]),$$ 
    $$\quad \lambda_3 = -k_{on} (C_2 - [mGAP_{tot}]) \quad \text{and} \quad  \lambda_4 = -\frac{k_{on} k_{mGEF} (C_1 - C_2)}{k_{off} [mGAP_{tot}] + k_{on} [mGEF^*]}$$

    which are all negative and hence the steady state is LAS.

        \item If $C_1 < [mGAP_{tot}]$ and  $C_2>C_1$,the  Jacobian matrix evaluated at the steady state given by Eq. \ref{ss3_12} gives the  eigenvalues

    $$ \lambda_1 =  k_{on} (C_1 - [mGAP_{tot}]) \quad \text{and} \quad    \lambda_2 = -k_{tGAP} $$
    
    which are negative. Moreover, the other eigenvalues $\lambda_3$ and $\lambda_4$ are such that 
    $$  \lambda_3 + \lambda_4 = k_{on}(C_1 -C_2)  - C_1 k_{off} - k_{mGEF} < 0 $$  and 
  $$  \lambda_3 \lambda_4  = -k_{mGEF}k_{on} (C_1 - C_2)>0 $$
    
   and thus $\lambda_3$ and $\lambda_4$ are negative and hence the steady state is LAS.

    \item If $C_2 < [mGAP_{tot}]$ and $C_1 > C_2$,  the Jacobian matrix evaluated at the steady state given by Eq. \ref{ss4_12_new}   gives the  eigenvalues 
        $$\lambda_1 = - k_{mGEF} - C_2 k_{off} , \quad  \lambda_2 = k_{on} (C_2 - [mGAP_{tot}]), \quad \lambda_3 = -k_{tGAP} $$ and 
 $$\lambda_4 = -\frac{k_{on} k_{mGEF}(C_1 - C_2)}{C_2 k_{off} + k_{on} [mGEF^*]}$$
which are all negative and hence the steady state is LAS.
\end{enumerate}

\newpage
\section{Proof of Theorem \ref{prop_123}}
\label{appendix_123}
\normalsize

We proceed with the steady state analysis in the same way of Theorem \ref{prop_12}.
We consider the same four different cases and calculate the $\xi$-dependent families of steady states, where $\xi \geq 0$ represent the tG concentration. We also obtain necessary relationships for the conserved quantities $\tilde{C_2}$, $\tilde{C_1}$, and $[mGAP_{tot}]$, as well as admissible intervals for $\xi$ that guarantee the existence of nonnegative steady states.

\underline{Case 1: $\widehat{[mG^*]}=0$ and $\widehat{[mGAP^*]}=[mGAP_{tot}]$ }. 

From Eq. \ref{ss_red123:1}, we have $\widehat{[mG]}=0$ and subtracting  Eq. \ref{constrain2_123} from Eq. \ref{constrain1_123}, we get $\widehat{[tGEF]} = \tilde{C_2} - \tilde{C_1} \geq 0$ only if $\tilde{C_2} \geq \tilde{C_1} $. Substituting  $\widehat{[tGEF]}$ on the conservation law given by Eq. \ref{constrain2_123} and using Eq. \ref{ss_red123:2} to write $\widehat{[tG^*]} = \xi \frac{[\widehat{tGEF^*]}}{k [tGAP^*]}$, we obtain 
$$ \xi + \xi \frac{\widehat{[tGEF^*]}}{k [tGAP^*]} + (\tilde{C_2} - \tilde{C_1}) + [mGAP_{tot}] = \tilde{C_2} $$

and hence 
 $$ \widehat{[tGEF^*]} = \left(\tilde{C_1} - [mGAP_{tot}] -\xi] \right) \frac{k [tGAP^*]}{k[tGAP^*] +\xi} $$
 only if $\tilde{C_1} -  [mGAP_{tot}] \geq \xi. $ Therefore, in this case the $\xi$-dependent family of steady states is given by 
\begin{eqnarray*}
\widehat{\textbf{x}}_{\xi}
&=& \Bigg(0, 0, \xi, \frac{\left(\tilde{C_1} - [mGAP_{tot}] -\xi] \right) \xi}{k[tGAP^*] +\xi}, \tilde{C_2} - \tilde{C_1}, \\\\\nonumber
&&
\left(\tilde{C_1} - [mGAP_{tot}] -\xi] \right) \frac{k [tGAP^*]}{k[tGAP^*] +\xi}, [mGAP_{tot}] \Bigg)  \nonumber
\end{eqnarray*}

\underline{Case 2: $ \widehat{[tGEF]}=0$ and  $\widehat{[mGAP^*]}=[mGAP_{tot}]$}

Using Eq. \ref{ss_red123:1} to write $\widehat{[mG]} = \frac{ k (\tilde{C_2} - \xi) }{[mGEF^*]} \widehat{[mG^*]} $ and subtracting  Eq. \ref{constrain2_123} from  Eq. \ref{constrain1_123}, we obtain the expressions for $[mG^*]$ and $[mG]$  
$$\widehat{[mG^*]} = \frac{(\tilde{C_1} - \tilde{C_2}) [mGEF^*]}{ k [mGAP_{tot}] + [mGEF^*]} \quad \text{and} \quad \widehat{[mG]} = \frac{(\tilde{C_1} - \tilde{C_2}) k(\tilde{C_2} - \xi) }{ k[mGAP_{tot}] + [mGEF^*]}$$
and clearly we must have $\tilde{C_1} \geq \tilde{C_2}$. 
Now looking at Eq. \ref{constrain2_123} and substituting $\widehat{[tG^*]} = \frac{\widehat{[tGEF^*]} \xi}{k [tGAP^*]}$, we obtain 
$$ \widehat{[tGEF^*]} = \left(\tilde{C_2} - [mGAP_{tot}] -\xi] \right) \frac{k [tGAP^*]}{k[tGAP^*] +\xi} $$
only if $\tilde{C_2} -  [mGAP_{tot}] \geq \xi$. 
Therefore, in this case the $\xi$-dependent family of steady states is given by 
\begin{eqnarray*}
\widehat{\textbf{x}}_{\xi}
&=& \Bigg(\frac{(\tilde{C_1} - \tilde{C_2}) k[mGAP_{tot}] }{ [mGEF^*] + k[mGAP_{tot}] }, \frac{(\tilde{C_1} - \tilde{C_2}) [mGEF^*]}{ [mGEF^*] + k[mGAP_{tot}]}, \\\\ \nonumber
&& \xi , \frac{\left(\tilde{C_2} - [mGAP_{tot}] -\xi] \right) \xi}{k[tGAP^*] +\xi} ,0, \left(\tilde{C_2} - [mGAP_{tot}] -\xi] \right) \frac{k [tGAP^*]}{k[tGAP^*] +\xi} , [mGAP_{tot}]\Bigg). \nonumber
\end{eqnarray*}

\underline{Case 3: $\widehat{[mG^*]}=0$ and $\widehat{[tGEF^*]}=0$ }

From Eqs. \ref{ss_red123:1} and \ref{ss_red123:2}, we have $\widehat{[mG]}=0$  and $\widehat{[tG^*]}=0$, respectively. Subtracting  Eq. \ref{constrain2_123} from Eq. \ref{constrain1_123}, in this case we get $\widehat{[tGEF]} = \tilde{C_2} - \tilde{C_1} \geq 0$ only if $\tilde{C_2} \geq \tilde{C_1} $. Now, from the conservation law given by Eq. \ref{constrain1_123}, we obtain $\widehat{[mGAP^*]} = \tilde{C_1} - \xi$ and $\widehat{[mGAP^*]} \in [0,[mGAP_{tot}]]$ only if $\max(0,\tilde{C_1} - [mGAP_{tot}]) \leq \xi \leq \tilde{C_1}$. In this case, the $\xi$-dependent family of steady states is given by 
\begin{eqnarray*}
\widehat{\textbf{x}}_{\xi}
&=& \left(0,0,\xi,0,\tilde{C_2}-\tilde{C_1},0,\tilde{C_1} - \xi \right).
\end{eqnarray*}

\underline{Case 4: $\widehat{[tGEF]}=0$ and $\widehat{[tGEF^*]} = 0$ } 

Eq. \ref{ss_red123:2} gives $\widehat{[tG^*]}=0$ and the conservation law given by Eq. \ref{constrain2_123} yields  
$[mGAP^*]  = \tilde{C_2} - \xi$. 
Now using Eq. \ref{ss_red123:1} to write $\widehat{[mG]} = \frac{ k (\tilde{C_2} - \xi)}{[mGEF^*]} \widehat{[mG^*]} $, the conservation law given by Eq. \ref{constrain2_123} gives  
$$\widehat{[mG^*]} = \frac{(\tilde{C_1} - \tilde{C_2}) [mGEF^*]}{ k(\tilde{C_2} - \xi) + [mGEF^*]} \quad \text{and} \quad \widehat{[mG]} = \frac{(\tilde{C_1} - \tilde{C_2}) k(\tilde{C_2} - \xi) }{ k(\tilde{C_2} - \xi) + [mGEF^*]} $$
and since $[mGAP^*] \in \left[0,[mGAP_{tot}]\right]$ and the steady states must be nonnegative,
we must have $$\max(0,\tilde{C_2} - [mGAP_{tot}]) \leq \xi \leq \tilde{C_2} \leq \tilde{C_1}.$$

The $\xi$-dependent familiy of steady states is therefore given by 
\begin{eqnarray*}
\widehat{\textbf{x}}_{\xi}
&=& \left(\frac{(\tilde{C_1} - \tilde{C_2}) k(\tilde{C_2} - \xi) }{ [mGEF^*] + k(\tilde{C_2} - \xi) }, \frac{(\tilde{C_1} - \tilde{C_2}) [mGEF^*]}{ [mGEF^*] + k(\tilde{C_2} - \xi)}, \xi ,0,0,0, \tilde{C_2}- \xi \right). \nonumber
\end{eqnarray*}


\end{document}